\pdfoutput=1
\documentclass[journal,12pt, onecolumn]{IEEEtran}

\usepackage{setspace}
\doublespacing

\ifCLASSINFOpdf
  % [pdftex]{graphicx}
  % declare the path(s) where your graphic files are
  % \graphicspath{{../pdf/}{../jpeg/}}
  % and their extensions so you won't have to specify these with
  % every instance of \includegraphics
  % \DeclareGraphicsExtensions{.pdf,.jpeg,.png}
\else
  % or other class option (dvipsone, dvipdf, if not using dvips). graphicx
  % will default to the driver specified in the system graphics.cfg if no
  % driver is specified.
  \usepackage[dvips]{graphicx}
  % declare the path(s) where your graphic files are
  % \graphicspath{{../eps/}}
  % and their extensions so you won't have to specify these with
  % every instance of \includegraphics
  \DeclareGraphicsExtensions{.eps}
\fi
% graphicx was written by David Carlisle and Sebastian Rahtz. It is
% required if you want graphics, photos, etc. graphicx.sty is already
% installed on most LaTeX systems. The latest version and documentation can
% be obtained at: 
% http://www.ctan.org/tex-archive/macros/latex/required/graphics/
% Another good source of documentation is "Using Imported Graphics in
% LaTeX2e" by Keith Reckdahl which can be found as epslatex.ps or
% epslatex.pdf at: http://www.ctan.org/tex-archive/info/
%
% latex, and pdflatex in dvi mode, support graphics in encapsulated
% postscript (.eps) format. pdflatex in pdf mode supports graphics
% in .pdf, .jpeg, .png and .mps (metapost) formats. Users should ensure
% that all non-photo figures use a vector format (.eps, .pdf, .mps) and
% not a bitmapped formats (.jpeg, .png). IEEE frowns on bitmapped formats
% which can result in "jaggedy"/blurry rendering of lines and letters as
% well as large increases in file sizes.
%
% You can find documentation about the pdfTeX application at:
% http://www.tug.org/applications/pdftex
% *** GRAPHICS RELATED PACKAGES ***
%
\usepackage{amsthm}

\newtheorem{pr}{Proposition}

\newtheorem{lemma}{Lemma}

\newtheorem{thm}{Theorem}

\ifCLASSINFOpdf
  \usepackage[pdftex]{graphicx}
  % declare the path(s) where your graphic files are
  \graphicspath{{../pdf/}{../jpeg/}}
  % and their extensions so you won't have to specify these with
  % every instance of \includegraphics
   \DeclareGraphicsExtensions{.pdf,.jpeg,.png}
\else
  % or other class option (dvipsone, dvipdf, if not using dvips). graphicx
  % will default to the driver specified in the system graphics.cfg if no
  % driver is specified.
  \usepackage[dvips]{graphicx}
  % declare the path(s) where your graphic files are
   \graphicspath{{../eps/}}
  % and their extensions so you won't have to specify these with
  % every instance of \includegraphics
   \DeclareGraphicsExtensions{.eps}
\fi
% graphicx was written by David Carlisle and Sebastian Rahtz. It is
% required if you want graphics, photos, etc. graphicx.sty is already
% installed on most LaTeX systems. The latest version and documentation can
% be obtained at: 
% http://www.ctan.org/tex-archive/macros/latex/required/graphics/
% Another good source of documentation is "Using Imported Graphics in
% LaTeX2e" by Keith Reckdahl which can be found as epslatex.ps or
% epslatex.pdf at: http://www.ctan.org/tex-archive/info/
%
% latex, and pdflatex in dvi mode, support graphics in encapsulated
% postscript (.eps) format. pdflatex in pdf mode supports graphics
% in .pdf, .jpeg, .png and .mps (metapost) formats. Users should ensure
% that all non-photo figures use a vector format (.eps, .pdf, .mps) and
% not a bitmapped formats (.jpeg, .png). IEEE frowns on bitmapped formats
% which can result in "jaggedy"/blurry rendering of lines and letters as
% well as large increases in file sizes.
%
% You can find documentation about the pdfTeX application at:
% http://www.tug.org/applications/pdftex

\usepackage{tikz}
\usetikzlibrary{shapes, arrows, patterns}

% *** MATH PACKAGES ***

%
\usepackage{amsmath,amsthm,amssymb}
\usepackage{graphicx}
\usepackage{amsmath}
\usepackage{dsfont}
\usepackage{enumerate}
\usetikzlibrary{calc}
\usepackage{stfloats}

% *** ALIGNMENT PACKAGES ***
%
%\usepackage{array}
% Frank Mittelbach's and David Carlisle's array.sty patches and improves
% the standard LaTeX2e array and tabular environments to provide better
% appearance and additional user controls. As the default LaTeX2e table
% generation code is lacking to the point of almost being broken with
% respect to the quality of the end results, all users are strongly
% advised to use an enhanced (at the very least that provided by array.sty)
% set of table tools. array.sty is already installed on most systems. The
% latest version and documentation can be obtained at:
% http://www.ctan.org/tex-archive/macros/latex/required/tools/

%\usepackage{mdwmath}
%\usepackage{mdwtab}
% Also highly recommended is Mark Wooding's extremely powerful MDW tools,
% especially mdwmath.sty and mdwtab.sty which are used to format equations
% and tables, respectively. The MDWtools set is already installed on most
% LaTeX systems. The lastest version and documentation is available at:
% http://www.ctan.org/tex-archive/macros/latex/contrib/mdwtools/

% IEEEtran contains the IEEEeqnarray family of commands that can be used to
% generate multiline equations as well as matrices, tables, etc., of high
% quality.

%\usepackage{eqparbox}
% Also of notable interest is Scott Pakin's eqparbox package for creating
% (automatically sized) equal width boxes - aka "natural width parboxes".
% Available at:
% http://www.ctan.org/tex-archive/macros/latex/contrib/eqparbox/

% *** SUBFIGURE PACKAGES ***
\usepackage[tight,footnotesize]{subfigure}

\usepackage{caption}
%\captionsetup{font=footnotesize,labelfont=footnotesize}

\captionsetup{font=small,labelfont=footnotesize}
% *** PDF, URL AND HYPERLINK PACKAGES ***
%
%\usepackage{url}
% url.sty was written by Donald Arseneau. It provides better support for
% handling and breaking URLs. url.sty is already installed on most LaTeX
% systems. The latest version can be obtained at:
% http://www.ctan.org/tex-archive/macros/latex/contrib/misc/
% Read the url.sty source comments for usage information. Basically,
% \url{my_url_here}.

\usepackage{multirow}
		% transpose
			% real numbers
			% integers
			% complex numbers
		% big-O notation

% Custom operators
		% expected value
\DeclareMathOperator{\ee}{\mathbb{E}}			% expected value
\DeclareMathOperator{\prob}{\mathbb{P}}			% probability
		% probability
			% trace
		% covariance

% *** Do not adjust lengths that control margins, column widths, etc. ***
% *** Do not use packages that alter fonts (such as pslatex).         ***
% There should be no need to do such things with IEEEtran.cls V1.6 and later.
% (Unless specifically asked to do so by the journal or conference you plan
% to submit to, of course. )
%\usepackage{setspace}

% correct bad hyphenation here
\hyphenation{op-tical net-works semi-conduc-tor}
%\doublespacing

\begin{document}

% 
% paper title
% can use linebreaks \\ within to get better formatting as desired
\title{LORD: Leader-based framework for Resource Discovery in Mobile Device Clouds
%Device-to-Device Communications%-Enabled Mobile Clouds}
}
%\title{An Incentive-Based Management Framework for Service Discovery in Device-to-Device Communications-Enabled Mobile Clouds}
%
%
% author names and IEEE memberships
% note positions of commas and nonbreaking spaces ( ~ ) LaTeX will not break
% a structure at a ~ so this keeps an author's name from being broken across
% two lines.
% use \thanks{} to gain access to the first footnote area
% a separate \thanks must be used for each paragraph as LaTeX2e's \thanks
% was not built to handle multiple paragraphs
%\IEEEoverridecommandlockouts

\author{Seyed~Mohammad~Asghari,~Yi-Hsuan~Kao,~Mohammad~Hassan~Lotfi, \\Mohammad~Noormohammadpour,~Bhaskar~Krishnamachari, \\
~Babak~Hossein~Khalaj, and Marcos~Katz
\thanks{
S. M. Asghari, Yi-Hsuan Kao, M. Noormohammadpour, Bhaskar~Krishnamachari are with the Department of Electrical Engineering, University of Southern California, Los Angeles, CA, USA
(e-mail:\{asgharip, yihsuank, noormoha, bkrishna\}@usc. edu)}
\thanks{
M. H. Lotfi is with the Department of Electrical and System Engineering, University of Pennsylvania, Pennsylvania, USA (e-mail: lotfm@seas.upenn.edu).}
\thanks{
B. H. Khalaj is with the Department of Electrical Engineering, Sharif University of Technology, Tehran, Iran (e-mail: khalaj@sharif.edu).}
\thanks{
M. Katz is with the Department of Electrical Engineering, University of Oulu, Oulu, Finland (e-mail: mkatz@ee.oulu.fi).
}}
\maketitle
%\newpage

\begin{abstract}
We provide a novel solution for Resource Discovery (RD) in mobile device clouds consisting of selfish nodes. Mobile device clouds (MDCs) refer to cooperative arrangement of communication-capable devices formed with resource-sharing goal in mind. Our work is motivated by the observation that with ever-growing applications of MDCs, it is essential to quickly locate resources offered in such clouds, where the resources could be content, computing resources, or communication resources. The current approaches for RD can be categorized into two models: decentralized model, where RD is handled by each node individually; and centralized model, where RD is assisted by centralized entities like cellular network. However, we propose LORD, a Leader-based framewOrk for RD in MDCs which is not only self-organized and not prone to having a single point of failure like the centralized model, but also is able to balance the energy consumption among MDC participants better than the decentralized model. 
Moreover, we provide a credit-based incentive to motivate participation of selfish nodes in the leader selection process, and present the first energy-aware leader selection mechanism for credit-based models.
The simulation results demonstrate that LORD balances energy consumption among nodes and prolongs overall network lifetime compared to decentralized model.

\end{abstract}
%\begin{keyword}
%% keywords here, in the form: keyword \sep keyword
%\begin{keywords}
%Mobile Device Clouds, Resource Discovery, Self-organizing, Incentive Mechanism, First-price Auction, Leader Selection.
%\end{keywords}
%% PACS codes here, in the form: \PACS code \sep code

%% MSC codes here, in the form: \MSC code \sep code
%% or \MSC[2008] code \sep code (2000 is the default)

%\end{keyword}
% To allow for easy dual compilation without having to reenter the
% abstract/keywords data, the \IEEEcompsoctitleabstractindextext text will
% not be used in maketitle, but will appear (i.e., to be "transported")
% here as \IEEEdisplaynotcompsoctitleabstractindextext when compsoc mode
% is not selected <OR> if conference mode is selected - because compsoc
% conference papers position the abstract like regular (non-compsoc)
% papers do!
%\IEEEdisplaynotcompsoctitleabstractindextext
% \IEEEdisplaynotcompsoctitleabstractindextext has no effect when using
% compsoc under a non-conference mode.

% For peer review papers, you can put extra information on the cover
% page as needed:
% \ifCLASSOPTIONpeerreview
% \begin{center} \bfseries EDICS Category: 3-BBND \end{center}
% \fi
%
% For peerreview papers, this IEEEtran command inserts a page break and
% creates the second title. It will be ignored for other modes.
%\IEEEpeerreviewmaketitle

\section{Introduction}
According to Gartner \cite{xx1}, smartphone sales have surpassed one billion units in 2014. %Being equipped with a powerful operating system enables these devices to handle third-party applications, and consequently adds new services and capabilities to such devices. 
All such devices are typically equipped with multiple types of wireless networking interfaces especially Wi-Fi. By taking advantage of traditional device to device communications (e.g., Wi-Fi direct \cite{x37}, and Bluetooth), devices are able to communicate directly with each other without assistance of any access points or involvement of network operators. 
%Recently, there has been a noticeable trend in D2D communications where the network operators control the communication process to provide better user experience (i.e., LTE-direct proposed by Qualcomm \cite{x51}). 
%Such D2D communications-enabled devices are able to share their own resources with others; where resources can range from content (e.g., music and photo files) to processing power of Central Processing Unit (CPU) and Graphics Processing Unit (GPU). 
Throughout this paper, we use the term \textit{Mobile Device Cloud (MDC)} \cite{x334}, \cite{x335} for the cooperative arrangement of these communication-capable devices, which is formed with the resource sharing goal in mind. The resources can include content (e.g., music, photo, and video files) \cite{xx4}, computing resources \cite{xx7}, and communication resources (e.g., communication link to LTE network) \cite{xx3}. In MDC environments, computational offloading is performed by means of a set of mobile devices instead of remote cloud resources. 
%By utilizing such MDCs, the following advantages can be achieved: 
%\begin{enumerate}
%\item Improvement in performance measures such as data throughput and security \cite{xx3}.
%\item Enhancement in resource usage efficiency such as energy efficiency in cooperative video streaming \cite{xx4}.
%\item Accessibility of novel services and applications which are hard to achieve in an autonomic/non-cooperative way (e.g., sharing loudspeakers to create 3D sound effect, CPU sharing, file sharing, and sensor sharing \cite{x39}).
%\end{enumerate}

Despite the advantages of MDCs, a number of key problems should be addressed before they can be realized in practice. One of the most important issues is resource discovery (RD), i.e., identifying which resources are offered in such MDCs and also how to obtain them. 
As an example, consider a situation where a client wants to solve engineering problems or process photos and video files, yet it is unable to perform this task individually either due to lack of enough battery life or because of the fact that the computing requirements of such a task outstrip what his device is able to accomplish. However, there are often a large number of nearby smartphones which remain unused most of the time. Smartphones now have powerful processors (e.g., Dual-core 1.4 GHz Cyclone and PowerVR GX6450 of iPhone 6), powerful operating systems (e.g., Apple iOS and Google Android), remarkable battery life, and plentiful memory that make them appropriate for processing tasks.
This client can offload his task to one (or more) near device(s) and hence, it needs to discover which devices can offer computing resources.

The current approaches for RD can be categorized into two models: decentralized and centralized. 
In the decentralized model, each node is responsible for RD itself; however, in the centralized one, RD is handled by one server. As an example of centralized model, a network-assisted device to device communication has been proposed in \cite{x53} where LTE network can act as the server.
% and assist devices in the RD phase.
Although centralized models have significant advantages in comparison with decentralized one (e.g., lower communication overhead, lower RD delay, more suited to dynamic environments \cite[and references therein]{x14}), they suffer from the important drawback of having a critical point of failure \cite{x14}. In case of the model proposed in \cite{x53}, such failures can be due to the fact that the devices have no access to the LTE network, or the Base Transceiver Station (BTS) operation is disrupted.  

The centralized model does not suit very well MDC environments that lack a predefined organization and where accessibility of nodes to an external server (e.g. via LTE connection) can be intermittent. On the other hand, considering the advantages of centralized model in comparison with decentralized one, task-based self-organizing algorithms can be used to address the point of failure problem by sequentially selecting the best nodes to act as servers \cite{x34}. In \cite{x15}, authors proposed a management framework for MDCs in which a node with highest level of resources is selected as leader. Such a leader is responsible for introducing a suitable resource provider to each RD requesting node. However, such a framework suffers from the key disadvantage that it assumes each node reports its resource level truthfully, which is not necessarily true in the presence of selfish nodes. Since the leader consumes more energy compared to the other nodes in the MDC, nodes have no inherent incentives for accepting the leadership role. 

%As a result, without external incentives for serving others, selfish nodes may become untruthful about their resource level in order to avoid taking the leadership role. 
%Current approaches to provide such incentive can be broadly divided into two main categories: price-based systems \cite{x16}, \cite{x17} and reputation-based systems \cite{x18}, \cite{x19}. In the former category, cooperation is explicitly remunerated by virtual credits. On the other hand, in the latter one, the behavior of nodes is monitored by other nodes, and misbehaving nodes are deprived of receiving further services. 
Although the presence of selfish nodes in the packet forwarding problem has been addressed well in the literature (e.g., \cite{x20}, \cite{x21}), leader selection and RD in the presence of such nodes is not a well-studied topic
\cite{Mohammad_2013_malicious}. In \cite{x22}, in order to provide selfish nodes with incentives in the form of reputations, the authors have proposed a secure leader selection model for intrusion detection systems based on the Vickrey, Clarke, and Groves (VCG) mechanism. 
In addition to the fact that there is no formal specification and analysis of the type of
incentive provided by reputation-based systems, the use of price-based systems is more advantageous for RD in MDC since they are more secure than reputation-based ones \cite{x23}. However, 
the model in \cite{x22} cannot be applied to price-based systems. That is due to the fact that VCG mechanism is not budget-balanced\footnote[1]{For a mechanism which requires transfer of money among the nodes, Budget-Balanced means that the net transfer among nodes is zero. Hence, there is no need to inject money to the system \cite{x24}.} and it needs money to be injected to the system which is not suitable \cite{x2200}. 
In \cite{Mohammad_2013_leader}, the authors proposed a leader selection process which is based on a series of one-on-one incomplete information, alternating offers bargaining games. However, the proposed leader selection process is not appropriate for MDC environment because of the messaging overhead it imposes on the MDC.
Therefore, a new leader selection model compatible with price-based systems is required.
%In \cite{x230}, an incentive-based leader selection mechanism has been proposed for mobile adhoc networks. However, since the proposed leader selection mechanism is based on a series of alternating-offers bargaining games, it imposes a remarkable amount of overhead on the network which makes this approach inappropriate for RD in MDCs.
%In this paper, we propose a new leader selection mechanism which is compatible with price-based systems. 

We list our main contributions as follows:
\begin{enumerate}
\item \textbf{LORD, a Leader-based framewOrk for RD:} 
We propose LORD which is not only self-organized and not prone to having a single point of failure like the centralized model, but also is able to balance the energy consumption among MDC participants better than the decentralized model. In LORD, to encourage nodes to assume leadership role, incentives are designed in the form of credit transfer from RD requesting nodes to the leader as an RD fee. 
%Since the leader earns credit by introducing a proper resource provider to a resource requesting node, if the amount of RD fee is lucrative enough to compensate for the cost imposed on the leader, all nodes will have incentive to compete with each other in becoming the leader.

\item \textbf{The first leader selection mechanism for credit-based systems:}
We propose a leader selection mechanism which is based on first-price sealed-bid auction. The pivotal difference between our mechanism and preceding ones is that we are seeking two objectives simultaneously: \textit{sequentially selecting nodes with the highest remaining energy as the leader based on proposed credit transfer model} and \textit{determining an appropriate RD fee that is lucrative for both the leader and RD requesting nodes}.

\item \textbf{Comparative performance evaluation:}
We evaluate the performance of LORD through simulation. Compared to decentralized model for RD, simulation results demonstrate that LORD balances energy consumption among nodes in the MDC and prolongs overall network lifetime. Furthermore, nodes enjoy higher payoffs in LORD compared to decentralized model.

\end{enumerate}

\section{The System Architecture}
We consider a mobile device cloud (MDC) in which all participating nodes have demand for RD. If a node is selected as the leader, the cost imposed on this node is expressed in its Resource Discovery Provisioning Cost (RDPC), which will be defined in Section \ref{section:parameters}. In LORD, we select the node with the lowest RDPC as the leader. %However, the value of such cost is private to every node. Therefore, nodes may not truthfully reveal their RDPCs unless attractive incentives are provided. For this purpose, 
To encourage nodes to assume leadership role, we consider incentives in the form of credit transfer from RD requesting nodes to the leader. 

In order to select the node with the lowest RDPC as the leader, we propose a leader selection mechanism which is based on multi-player first-price sealed-bid auction. In the leader selection process, each node offers an RD fee in order to compensate for the RDPC imposed on it. The amount of the RD fee offered by each node reflects the private value of its RDPC. Then, the node with the lowest offered RD fee is selected as the leader, which minimizes the total RD cost of participants in the MDC. 

As the resource level of nodes decreases over time, their RDPC values will increase. Since the amount of payment offered by each node in the leader selection process is on the basis of the current value of RDPC, at some point, the amount of payment may not be enough to compensate its RDPC. In order to address this issue, each node assumes leadership role for one time-slot that lasts $T_{select}$ units of time.
%, and the length of this time-slot as well as the number of RDs the leader presents are selected such that the amount of payment the leader receives be lucrative enough until the end of current time-slot. 
In order to implement a multi-player auction among participating nodes, LORD consists of two different phases as illustrated in Fig. \ref{figure 5}.

\begin{figure}[b]
\begin{center}
\includegraphics[width=2.8in,keepaspectratio]{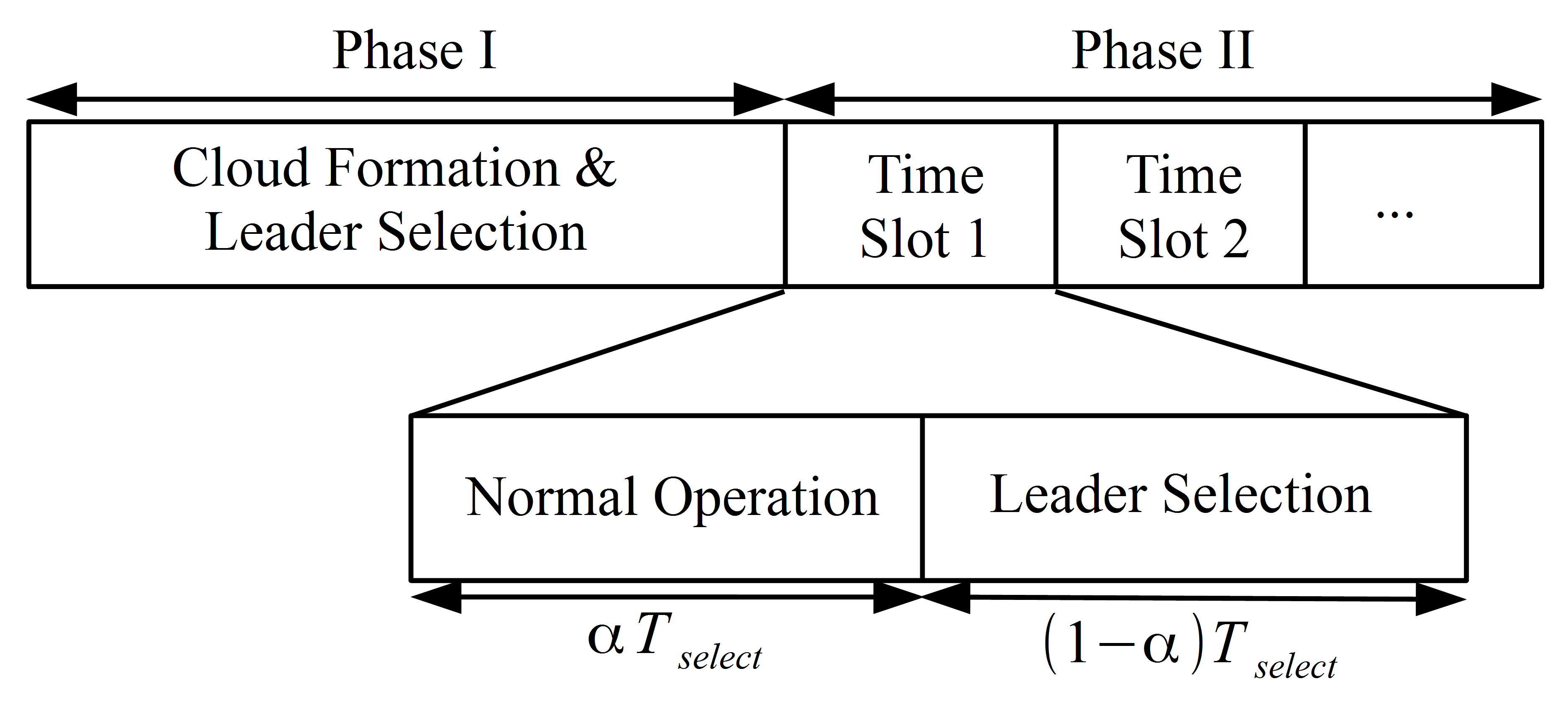}
\caption{LORD timing for MDC}
\label{figure 5}
\end{center}
\end{figure}

The aim of the first phase is to form a MDC including all participants, and to select the node with the lowest RD fee as its leader. Since there is no central entity known to all nodes to perform a multi-player auction, 
%one solution is that each node broadcasts its RD fee to all other nodes imposing significant communication overhead on the network. In order to address this issue, 
we perform the auction as a series of two-node interactions. In each two-node interaction, the node that has the lowest RD fee is selected as the leader, and another node is considered as a client. Then, the interactions are performed one-on-one among the remaining leaders. The first phase ends when just one node is left which is the leader. This node is recognized as the leader of the MDC by others, i.e., clients.

In the second phase, the leader selected in the first phase assumes the leadership role during the first time-slot. Such a leader is responsible for classifying the available resource status data in a database based on the information each node sends regularly to it. After receiving an RD request from a client, the leader uses the information available in the database to find a proper resource provider, and then receives a voucher from the RD requesting node in return. 

The value of each voucher is equal to the amount of the RD fee it has offered in the leader selection process. Afterwards, the RD requesting node and the resource provider reach an agreement over the protocols and other requirements of the resource sharing such as resource price; a process which is beyond the scope of this paper. Finally, after a time equal to $\alpha T_{select}$, where $\alpha$ is slightly smaller than 1, a new leader should be selected for the next time-slot. Since current leader of the MDC is known to clients, such a node performs an auction among all participants of the MDC. The winner of the auction assumes leadership role for the next time-slot.

We also consider the presence of one Trusted Unit (TU) which is only responsible for credit transfer among participants \cite{x23} and also can act as a mediator \cite{x500} to avoid misbehavior of nodes. It should be noted that since our objective is to design a framework which is independent of permanent presence of any unit outside of the MDC, the TU cannot act as the entity performing the auction among participating nodes. In general, any trusted entity outside the MDC which is intermittently accessible by all members, can be considered as TU. 
% .(similar to Trusted Third Party (TTP) in ) in abnormal cases where one party is trying to deviate from the rules, e.g., one of RD requesting nodes or the leader aims at cheating during RD procedure. 
%However, since our objective is to design a framework which is independent of any management unit outside of the MDC and also can be implemented in a decentralized manner, we do not rely on real-time access to the TU in our system architecture. Therefore, the TU cannot act as the entity performing the auction among participating nodes. In general, any trusted entity out of the MDC which is accessible by all members, even temporarily, can be considered as TU in our system architecture. As a result, the TU can be either a server accessible via internet, or a BTS unit.

In LORD, a specific amount of credit is assigned to each node entering the MDC for the first time. In addition, each node can earn credits by providing resource for other nodes. In return, they can spend such credits to get resource from other nodes.
%\begin{figure}
%\begin{center}
%\includegraphics[width=2.2in,keepaspectratio]{pic1.png}
%\caption{Credit exchange model in mobile device clouds}
%\label{figure 1}
%\end{center}
%\end{figure}
A typical architecture of LORD is shown in Fig. \ref{figure 2}. One node plays the role of leader while others act as clients. When client 3 sends a request to the leader, the leader finds a proper resource provider among other nodes, which is client 5 in this case. Then, the leader introduces client 5 to client 3, and receives a voucher as RD fee in return. Finally, after reaching an agreement over the resource sharing requirements, client 5 provides the resource for client 3 and receives some credit in return. 
 %For clarification, consider the situation shown in Fig. \ref{figure 1} where node \textit{B} shares its processing power with node \textit{C} and earns credits in return. Then, in order to increase its overall data throughput, \textit{B} can spend such credits to use the additional data pipe provided by node \textit{A}.
\begin{figure}[t]
\begin{center}
\includegraphics[width=2.8in,keepaspectratio]{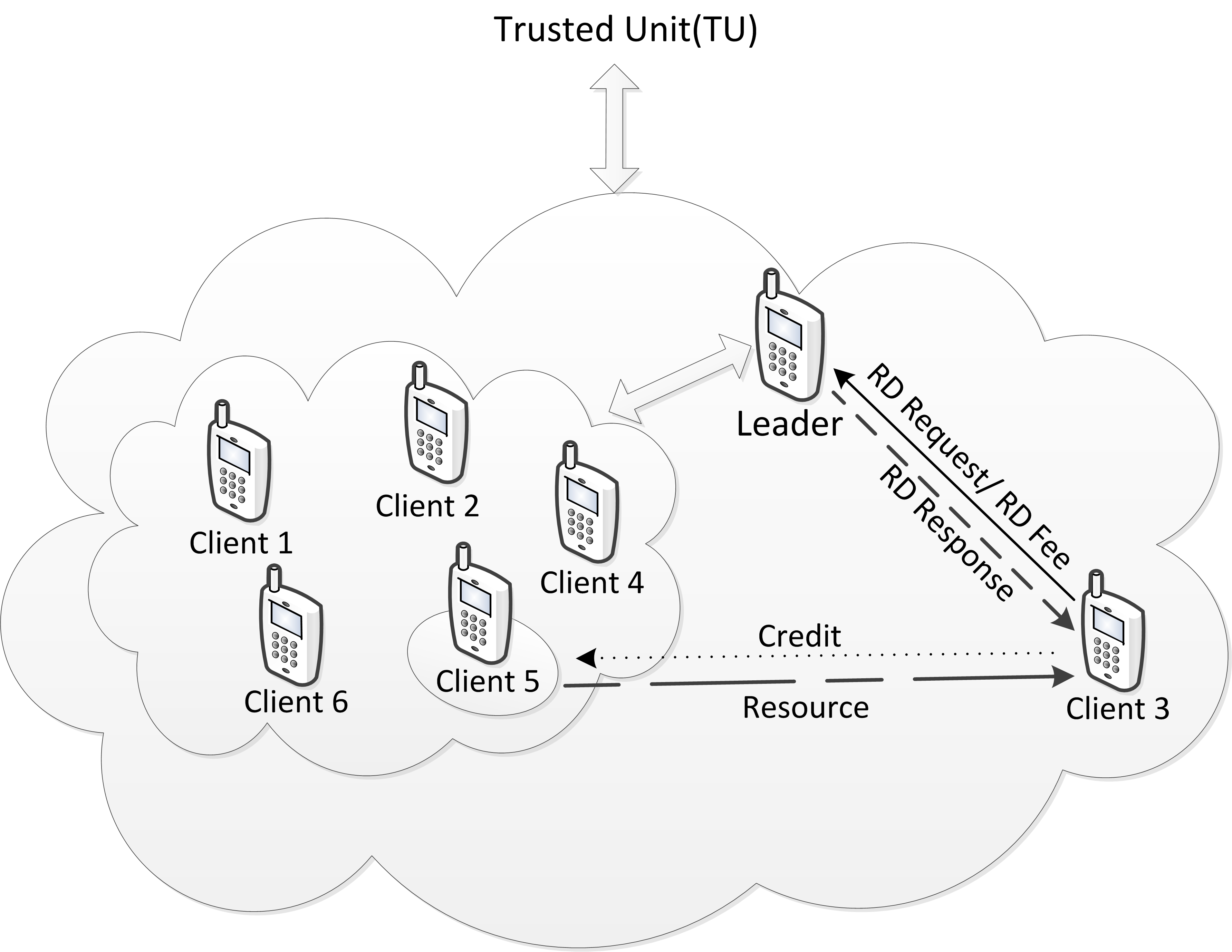}
\caption{System architecture for LORD}
\label{figure 2}
\end{center}
\end{figure}

%In the next section, we present LORD for MDCs.
\section{LORD: Leader-Based Framework for Resource Discovery} \label{section:model}

In this section, we present LORD for mobile device clouds (MDCs). First, we state our assumptions (Section~\ref{section:assumptions}). Then parameters of LORD are introduced in Section~\ref{section:parameters}. Section~\ref{section:Leader Selection Game} formulates the leader selection problem. Then, the two phases of LORD are introduced (Sections~\ref{section:Phase1} and \ref{section:phase2}). Finally, we consider the event of entrance and departure of nodes to/from the MDC.   

\subsection{Assumptions}\label{section:assumptions}
A MDC consisting of \textit{n} nodes is considered where each node has a unique identity. This MDC can be formed in various ways including manual assignment (similar to adding all iOS devices to one's iTunes, or android devices with Google), intelligent device
contact profiling, or via social profiles present on these devices (e.g. devices of members in the same household might be assigned to a single MDC). Such discovery and cloud clustering process can be addressed in various ways in the research
community \cite{x27} and therefore, we assume that such mechanisms exist \cite{x335}. Furthermore, we assume that the credit of all nodes are stored in a secure manner by the TU. In order to get credit from RD requesting nodes, the leader needs to report the vouchers it has received to the TU. It can save its vouchers inside a local storage such as a CompactFlash card, or a secure element such as a Subscriber Identification Module (SIM) card. Vouchers should be reported to the TU when the nodes have access to it.

In addition, since RDPCs of nodes should be calculated prior to beginning of each time-slot, the value of $\eta$ is fixed. We assume that the demand of participating nodes for RD in the MDC is high, but the leader responds to only $\eta$ number of RD queries.

Furthermore, we consider the following assumptions: 
\begin{itemize}
\item Every MDC participant is aware of the number of nodes participating in the leader selection process\footnote[1]{
Note that the duration of each leader selection process is negligible. Furthermore, each node maintains a table for routing purposes. Therefore, this assumption is reasonable because the number of nodes participating in the leader selection could be obtained easily from routing tables.}.
\item Time-slot synchronization is available among all nodes. 
\item A reliable pairwise connection among nodes is established at the transport layer. Therefore, all routing and error-correcting operations are handled by lower network layers and are hidden to the procedures we explain. As a result, there is an error-free route available between each two nodes. 
%\item In case of a network with large number of nodes, clustering algorithms (e.g., \cite{x28}) could be used in order to divide such nodes into a number of MDCs with appropriate sizes.   
%\item The connectivity environment is considered as a fully connected MDC with low disconnections. Considering a general network topology highly complicates the problem and we relegate consideration of such connectivity contexts to future works.

\end{itemize}
It should be noted that the first two assumptions have also been implemented in \cite{x22}.
\subsection{Parameters and Concepts}\label{section:parameters}

In LORD, one node is selected as the leader among $n$ participants. The selected leader has the responsibility of RD for a given time-slot. All clients should send the information about the status of their resources to the leader at the beginning of each time-slot. The information each node sends to the leader should represent whether this node can provide $RT_{k}$ (Resource Type $k$) and the amount of money (resource price) it expects to receive if it can provide such a resource. Based on such information, the leader prepares a database in which for each $RT_{k}$, it has been specified which nodes are able to provide this resource and how much their corresponding resource prices are. The leader uses this database to find the best resource provider, i.e., the one with the lowest resource price, for each RD request. Furthermore, the leader is responsible for selecting a new leader for the next time-slot.

If node $i$ is the leader and receives $p^*$ for each RD, its payoff will be as follows:
\begin{align}
\label{equation:xx1}
U_i &= (n-1)\eta(p^* - c_i^{S}- c_i^{I}) - c_i ^{DB}-m_i c_i^{S} - c_i ^{LS}
\end{align}
where $\eta$, as mentioned in Section~\ref{section:assumptions}, is the number of RD requests of each client to which the leader responds, $c_i^{S}$ is the processing cost imposed on node $i$ for searching the database in order to find a proper resource provider, $c_i^{I}$ is the messaging cost imposed on node $i$ for introducing such a resource provider to the RD requesting node, $c_i^{DB}$ is the processing cost imposed on node $i$ for preparing a database, and $c_i^{LS}$ is the messaging cost imposed on node $i$ for selecting a new leader among all the participants.  Finally, $m_i$ is the number of times this node searches the database for finding a resource provider for itself.

%
%Note that at a time-slot of duration $T_{select}$, each node advertises its resources $\lfloor T_{select}/T_{adv}\rfloor $ times, where $\lfloor x \rfloor $ is the floor function mapping $x$ to the largest preceding integer. Considering a cloud with $n$ participants, the total number of messages the leader should process is $(n-1)\lfloor T_{select}/T_{adv}\rfloor $. Therefore, $c_i^{DB}$ is a function of $(n-1)\lfloor T_{select}/T_{adv}\rfloor $ and the process of node $i$ ($RS_i$). In addition, $c_i^{S}$ and $c_i^{LS}$ are functions of both the resource status of the leader ($RS_i$) and the number of participating nodes ($n$), and $c_i^{M}$ is a function of only the resource status of node $i$ ($RS_i$). It should be noted that although the value of $c_i^{S}$ is not determined when nodes want to set their RDPC, there is no problem in computing the value of $c_i^{S}$. This is due to the fact that the value of $c_i^{S}$ only hinges on two factors: the resource status of node $i$ ($RS_i$), and the number of entries in the database which depends on the number of nodes ($n$). Therefore, when the number of nodes is specified (using the information of routing tables), $c_i^{S}$ can be readily calculated.

Simplifying (\ref{equation:xx1}), the leader's payoff can be rewritten as,
\begin{equation}
\label{equation:xx2}
U_i =  M(p^* - c_i)
\end{equation}
where $M$ is the total number of RDs the leader should perform for clients during one time-slot, i.e., $ M=(n-1)\eta$, and $c_i$ is the Resource Discovery Provisioning Cost (RDPC) per RD imposed on the leader given by:
\begin{align}
\label{equation:xx3}
c_i = c_i ^{S} +c_i ^{M} + \dfrac{c_i ^{DB} + c_i ^{LS} + m_i c_i ^{S}}{M}
\end{align} 

It should be noted that $c_i^{DB}$ and $c_i^{S}$ are the processing costs of preparing and searching a database which hinges on the processing power of node $i$, and the number of entries in the database (which depends on the number of nodes ($n$)). Therefore, although the database has not been constructed when nodes want to set their RDPC, since the number of nodes is specific (using the information of routing tables), $c_i^{S}$ and $c_i^{DB}$ can be readily calculated.
%although $c_i^{S}$ and $c_i^{DB}$ are the processing costs of preparing and searching a database which has not been constructed when nodes want to set their RDPC, there is no problem in computing the value of $c_i^{S}$ and $c_i^{DB}$. This is due to the fact that the value of these two costs only hinge on two factors: the processing power of node $i$, and the number of entries in the database which depends on the number of nodes ($n$). Therefore, when the number of nodes is specified (using the information of routing tables), $c_i^{S}$ and $c_i^{DB}$ can be readily calculated. 
For simplicity of analysis, we assume these two costs are equal to $c_i^{P}$ which represents the processing cost imposed on node $i$. Furthermore, the term $c_i^{I}$ is a messaging cost requiring only 3 messages (as will be explained in Section \ref{trans_protocol}), and the term $c_i^{LS}$ is also a messaging cost requiring $2(n-1)$ messages (as will be explained in Section \ref{section:phase2}). If we define $c_i^{M}$ as the cost imposed on node $i$ for sending a message, $c_i^{I}$ and $c_i^{LS}$ will be equal to $3c_i^{M}$ and $2(n-1)c_i^{M}$, respectively. Therefore, (\ref{equation:xx3}) can be simplified as follows:
%\vspace*{-3mm}
\begin{align}
\label{equation:xx333}
c_i = (1 + \dfrac{m_i +1}{M})c_i ^{P} + (3+\dfrac{2}{\eta})c_i ^{M}
\end{align} 

At each time-slot, each client should send a message to the leader representing which resources this node can provide and how much money (resource price) it expects to receive if it can provide such a resource. Subsequently, if a client intends to receive one RD from the leader, it should send one RD request imposing $c_i ^{M}$ on this node and also pay $p^*$ to the leader. Assume that the valuation of each RD for a client is fixed and high. Therefore, such a valuation does not affect the strategy of the client. If client $i$ intends to receive $\eta$ number of RDs from the leader, the payoff of this node is defined as:
\begin{align}
\label{equation:xx4}
U_i = -\eta p^* -\eta c_i ^{M} - c_i ^{M} 
\end{align}
where the first term of the right-hand side of (\ref{equation:xx4}) is the monetary cost imposed on this client for $\eta$ number of RDs, the second term is the messaging cost imposed on this client for sending $\eta$ number of RD requests to the leader, and the third term is the cost of sending one message including the information about the status of this client's resources to the leader.

On the other hand, in decentralized scheme for RD, there is no leader. Therefore, each node performs RD itself. Similar to our model which introduces the best resource provider to each RD requesting node, if node $i$ desires to find the best resource provider for $RT_{k}$ itself, it should send messages to all other $n-1$ nodes one-by-one and ask them whether they can provide $RT_{k}$ and inquire about the amount of money (resource price) they expect to receive if they can provide such a resource. After sending $n-1$ messages which impose $(n-1)c_i ^{M}$ cost on node $i$, this node is able to find the best resource provider for $RT_{k}$, i.e., the one with the lowest resource price. Therefore, if node $i$ intends to perform $\eta$ number of RDs itself, its payoff is given by: 
%\vspace*{-2mm}
\begin{align}
\label{equation:xx5}
U_i = -\eta (n-1)c_i ^{M} =  -\eta \overline{c}_i
\end{align}
where $\overline{c}_i$ is the Resource Discovery Cost (RDC) per RD imposed on node $i$ in a decentralized scheme.

In this paper, we assume RDPC of each node is smaller in comparison with its RDC, i.e., $c_i < \overline{c}_i \hspace{3mm} \forall i$. Since RDPC and RDC depend on $c_i ^{P}$ and $c_i ^{M}$, specifying the precise value of RDPC and RDC needs experimental data which is beyond the scope of this paper. It can be shown that if $c_i ^{P} \leq c_i ^{M}$ (which is plausible since $c_i ^{P}$ is a negligible cost due to high processing power available in today's smartphones \cite{x29}), our assumption that $c_i < \overline{c}_i$ is valid (See Appendix \ref{validating}).

By comparing (\ref{equation:xx4}) and (\ref{equation:xx5}), it can be concluded that the client $i$ participates in LORD if, 
\begin{align}
\label{equation:xx668}
&-\eta p^* -(\eta +1) c_i ^{M} > -\eta (n-1)c_i ^{M} \Longrightarrow
\nonumber \\
&  p^* < (n-2)c_i ^{M} - \dfrac{1}{\eta} c_i ^{M} < (n-1)c_i ^{M} \Longrightarrow p^* < \overline{c}_i
\end{align} 

Assume that every node $i$ bids an RD fee ($b_i$) in the interval $[c_i,\overline{c}_i]$ and proportional to its RDPC (i.e., $b_i <\overline{c}_i$ and the higher the RDPC, the higher the RD fee). If node $i$ with the lowest RD fee is selected as the leader, this node not only has the lowest RDPC, but also has the lowest RDC (i.e., $\overline{c}_i<\overline{c}_j \hspace{3mm}  j\neq i$). This is due to the fact that both RDPC and RDC depend on the resource status of a node and have a strictly decreasing relation with the resources status, i.e., if nodes are arranged based on their RDPCs and RDCs, all nodes have the same ranking position in both arrangements. 
%If we want all nodes to participate in our leader-based RD scheme, based on (\ref{equation:xx668}) the following condition should be satisfied:
%\begin{align}
%\label{equation:xx6}
%p^* <\overline{c}_i \hspace{3mm} \forall i \hspace{3mm} \Longrightarrow \hspace{3mm} p^* = \min_{\forall j} b_j <\overline{c}_i \hspace{3mm} \forall i 
%\end{align} 
If node $k$ has the lowest RD fee, i.e., $p^* = \min_{\forall j} b_j = b_k$, this node has the lowest RDPC and RDC, i.e., $\overline{c}_k <\overline{c}_i \hspace{3mm} \forall i\neq k$. Therefore,
\begin{align}
\label{equation:xx601}
b_k <\overline{c}_k  \hspace{3mm} \&  \hspace{2mm} \overline{c}_k<\overline{c}_i \hspace{3mm} \forall i\neq k \hspace{3mm} \Longrightarrow \hspace{3mm} b_k <\overline{c}_i \hspace{3mm} \forall i
\end{align} 
which satisfies the condition of (\ref{equation:xx668}).
%Based on (\ref{equation:xx601}), all nodes prefer to participate in our leader-based RD scheme, if the RD fee of the node with the lowest RDPC is smaller than its RDC.

%As the resource level of nodes decreases over time, their RDPC and RDC values will increase. Since the amount of payment offered by the node \textit{i} in the leader selection process is on the basis of the current value of $c_i$, at some point, the amount of payment may not be enough to compensate its RDPC. In order to address this issue, each node assumes leadership role for one time-slot, and the length of this time-slot as well as the number of RDs the leader presents are selected such that the amount of payment the leader receives be lucrative enough until the end of current time-slot. 

Let $E_i ^{req}$ denote the minimum level of energy required for node \textit{i} in order to assume the leadership role for one time-slot, then by considering (\ref{equation:xx2}) and (\ref{equation:xx333}),
\begin{align*}
%\label{equation:xx7}
E_i^{req} = \big((n-1)\eta + m_i +1\big) c_i ^{P} +  \big( 3(n-1)\eta+2(n-1) \big) c_i ^{M}  
\end{align*} 
Subsequently, each node can set its RDPC as follows: 
\begin{align}
\label{equation:xx8}
    c_i = \begin{cases}
               \infty   & E_i \le E_i^{req}\\
             (1 + \dfrac{m_i +1}{M})c_i ^{P} + (3+\dfrac{2}{\eta})c_i ^{M} & E_i > E_i^{req}
           \end{cases}
\end{align}
where $E_i$ is the current level of energy for the node \textit{i}. Based on (\ref{equation:xx8}), a node has a very high RDPC if its remaining energy is less than the energy required to assume leadership role for one time-slot.

\subsection{Auction-Based Leader Selection}
\label{section:Leader Selection Game}
Since the RDPC values of nodes are private to every node, we model the leader selection problem by an incomplete information game known as Bayesian Game. The leader selection problem is considered as a multi-player first-price sealed-bid auction game, where players are the nodes that participate in the MDC denoted by $N=\{1, 2,..., n\}$. The payoff of winner and losers are the payoff of leader node and client node, respectively. Since our objective is to compare the payoff of nodes in LORD with the corresponding payoffs in decentralized scheme for RD, in order to simplify our analysis, we eliminate $-(\eta+1) c_i ^{M}$ in (\ref{equation:xx4}) and rewrite (\ref{equation:xx2}), (\ref{equation:xx4}), and (\ref{equation:xx5}). By doing so, if node \textit{i} is the leader of LORD, its modified payoff will be as follows: 
\begin{align}
\label{equation:xx222}
\widehat{U}_i &=  M\Big(p^* -(1 + \dfrac{m_i +1}{M})c_i ^{P} - (3+\dfrac{2}{\eta})c_i ^{M}\Big) + (\eta +1) c_i ^M 
\nonumber
\\
&=M(p^* - \widehat{c}_i) 
\end{align}
where $\widehat{c}_i$ which is called modified RDPC is given by,
\begin{equation}
\label{equation:xx2232}
\widehat{c}_i = (1 + \dfrac{m_i +1}{M})c_i ^{P} + (3+\dfrac{2}{\eta} - \dfrac{\eta +1}{M})c_i ^{M}
\end{equation}
If node \textit{i} is a client in LORD, its modified payoff will be as follows:
\begin{equation}
\label{equation:xx223}
\widehat{U}_i =  -\eta p^* 
\end{equation}
Finally, if node \textit{i} participates in a decentralized scheme for RD, its modified payoff is given by:
\begin{equation}
\label{equation:xx224}
\widehat{U}_i =  -\eta (n-1)c_i ^M + (\eta +1) c_i ^M = -\eta (n-2-\dfrac{1}{\eta})c_i^M = -\eta \widehat{\overline{c}}_i
\end{equation}
where $\widehat{\overline{c}}_i$ which is called modified RDC is given by,
\begin{equation}
\label{equation:xx2233}
\widehat{\overline{c}}_i = (n-2-\dfrac{1}{\eta})c_i^M 
\end{equation}

In our game, the type of a player is its modified RDPC, i.e., $\widehat{c}_i$, which is private. The strategy of each player is the bid it submits as its RD fee. Such a parameter denoted by $b_i$ for node $i$ is a function of $\widehat{c}_i$, i.e., $b_i=\sigma_i (\widehat{c}_i)$ where $\sigma_i$ is an increasing, continuous, and differentiable function. We will later prove that the Nash Equilibrium (NE) solution indeed satisfies such an assumption. Assume that the belief of each player about the modified RDPC of others is independent of the player, and is  uniformly distributed in the interval $[0, K]$. Furthermore, the payoff of winner and losers of the game are (\ref{equation:xx222}) and (\ref{equation:xx223}), respectively. 

If player $i$ bids $b_i$, its expected payoff is,

\begin{small}
\begin{align} 
\label{equation:xx9}
 &\widehat{U}_i \big( \widehat{c}_i , b_i ;\{\sigma _k(\widehat{c}_k)\}_{k \neq i} \big)=
\nonumber \\
&
\ee [M(b_i - \widehat{c}_i)|b_i< \sigma _k(\widehat{c}_k) \hspace{2mm} \forall k \neq i ] \prob \big(b_i< \sigma _k(\widehat{c}_k) \hspace{2mm} \forall k \neq i\big) 
\nonumber
\\ & + \ee [-\eta\sigma _j(\widehat{c}_j)|\sigma _j(\widehat{c}_j)< b_i, \sigma _j(\widehat{c}_j)< \sigma _l(\widehat{c}_l) \hspace{2mm} \forall l \neq i,j ] 
\nonumber
\\ &
\times \prob \big(\sigma _j(\widehat{c}_j)< b_i, \sigma _j(\widehat{c}_j)< \sigma _l(\widehat{c}_l) \hspace{2mm} \forall l \neq i,j\big) 
\end{align}
\end{small}
where $\prob(.)$ and $\ee[.]$ denote the probability of an event and the expectation of a random variable, respectively. In (\ref{equation:xx9}), the term $M(b_i - \widehat{c}_i)$ is the payoff of player $i$ if it wins the auction, and the term $-\eta\sigma _j(\widehat{c}_j)$ is its payoff when it loses the auction and another player (without loss of generality call it $j$) wins the auction. 

We investigate the class of symmetric NEs, i.e., $\sigma_j (\widehat{c}_j)=\sigma (\widehat{c}) \hspace{3mm} \forall j$. Utilizing the notion of symmetric equilibria helps us reduce the multiplicity of NE. The payoff of player $i$ is,

\vspace*{-3mm}
\begin{small}
\begin{align} 
%\label{equation:x10}
&\widehat{U}_i \big(\widehat{c}_i , b_i ;\{\sigma(\widehat{c}_k)\}_{k \neq i}\big)=
\nonumber \\
& M(b_i - \widehat{c}_i) \int ...\int _ { \begin{array}{c}
\widehat{c}_k>\sigma ^{-1}(b_i) \\
\forall k \neq i \end{array}} \Big( \frac{1}{K}\Big)^{n-1} d\widehat{c}_1 ...d\widehat{c}_k \hspace{2mm} 
\nonumber
\\ &-\int_{
\widehat{c}_j<\sigma ^{-1}(b_i)}
\frac{\eta\sigma (\widehat{c}_j) }{K}\int ...\int _ {\begin{array}{c}
\widehat{c}_l>\sigma^{-1}\big(\sigma(\widehat{c}_j)\big)\\
\forall l \neq i,j  \end{array}} \Big(\frac{1}{K}\Big) ^{n-2} d\widehat{c}_1 ...d\widehat{c}_l d\widehat{c}_j 
\nonumber
\end{align}
\end{small}

Note that the distribution of modified RDPCs are independent from each other, therefore the bids are independent. Thus,
\begin{align} 
\widehat{U}_i \big(\widehat{c}_i , b_i &;\{\sigma(\widehat{c}_k)\}_{k \neq i}\big)
=
M(b_i - \widehat{c}_i) \Big(\int_ {\widehat{c}_k>\sigma ^{-1}(b_i)}\frac{1}{K}d\widehat{c}_k \Big) ^{n-1}
\nonumber \\
& -\int_{\widehat{c}_j<\sigma ^{-1}(b_i)}\dfrac{\eta\sigma (\widehat{c}_j) }{K} \Big( \int _ {\widehat{c}_l>\widehat{c}_j} \frac{1}{K} d\widehat{c}_l \Big) ^{n-2}d\widehat{c}_j
\nonumber  \\ 
 &=  M(b_i - \widehat{c}_i) \Big( \frac{K - \sigma ^{-1}(b_i)}{K}\Big) ^{n-1} 
\nonumber \\
& -\int_{\widehat{c}_j<\sigma ^{-1}(b_i)}\frac{\eta\sigma (\widehat{c}_j) }{K}\Big( \frac{K- \widehat{c}_j}{K}\Big) ^{n-2}d\widehat{c}_j 
\end{align}

Player $i$ submits $b_i$ that maximizes its expected payoff. The first order necessary condition (with respect to $b_i$) is,
\begin{align} 
%\label{equation:x12}
\frac{\partial{\widehat{U}_i}}{\partial b_i} &= M\big(K - \sigma ^{-1}(b_i)\big) ^{n-1} 
\nonumber \\ 
&
+ M(b_i - \widehat{c}_i)(n-1)\big(K - \sigma ^{-1}(b_i)\big)^{n-2}\Big(-\sigma ^{'}\big(\sigma^{-1}(b_i) \big)\Big)^{-1}
\nonumber \\ 
&- \Big(\sigma ^{'}\big(\sigma^{-1}(b_i) \big)\Big)^{-1} \eta\sigma \big(\sigma^{-1} (b_i)\big)\big(K- \sigma^{-1} (b_i)\big)^{n-2}=0
\nonumber
\end{align}

After simplification, Bayesian Nash equilibrium of the auction must satisfy the following differential equation:
\begin{align}
\label{equation:x13}
M(K-\widehat{c})\sigma^{'}(\widehat{c}) - M \Big(\sigma(\widehat{c}) -\widehat{c}\Big)(n-1) -\eta\sigma(\widehat{c}) =0
\end{align}

The following Lemma provides us with the solution to (\ref{equation:x13}).
\begin{lemma}\label{lemma:diff}
The following strategy solves the differential equation given by (\ref{equation:x13}):
%\begin{align}
%\label{equation:x14}
%b = \sigma (c)= \frac{(n-1)M}{nM+\eta}(c + \frac{MK}{(n-1)M+\eta})
%\end{align}
\vspace{-2mm}
\begin{align}
\label{equation:x14}
b = \sigma (\widehat{c})= \frac{(n-1)^2}{n(n-1)+1}\big(\widehat{c} + \frac{K(n-1)}{(n-1)^2+1}\big)
\end{align} 
 
\end{lemma}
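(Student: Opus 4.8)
The plan is to verify directly that the affine map in (\ref{equation:x14}) satisfies the first-order ODE (\ref{equation:x13}), and along the way to indicate how one is naturally led to this particular form. First I would record that the candidate strategy is affine in $c$: writing $a = \frac{(n-1)^2}{n(n-1)+1}$ and $d = a\cdot\frac{K(n-1)}{(n-1)^2+1}$, we have $\sigma(c) = ac + d$, hence $\sigma'(c) = a$ is constant. Substituting $\sigma$ and $\sigma'$ into the left-hand side of (\ref{equation:x13}) produces a polynomial in $c$ of degree at most one, so it suffices to check that the coefficient of $c$ and the constant term both vanish. The key ingredient here is the identity $M = (n-1)\eta$ (as stated with (\ref{equation:xx2})); this is precisely what makes the $\eta$-dependence cancel and yields the $\eta$-free coefficients appearing in (\ref{equation:x14}).

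Second, I would carry out the coefficient matching. Collecting the coefficient of $c$ in the left-hand side of (\ref{equation:x13}) gives the requirement $M(n-1) = a(Mn + \eta)$; using $M = (n-1)\eta$ the right-hand side becomes $a\eta\bigl(n(n-1)+1\bigr) = \eta(n-1)^2 = M(n-1)$, so this holds with the stated $a$. Collecting the constant term gives $MKa = d\bigl(M(n-1) + \eta\bigr)$; again with $M = (n-1)\eta$ we get $M(n-1)+\eta = \eta\bigl((n-1)^2+1\bigr)$, and the chosen $d$ yields $d\bigl(M(n-1)+\eta\bigr) = aK(n-1)\eta = MKa$. Hence the left-hand side of (\ref{equation:x13}) is identically zero and $\sigma$ solves the differential equation. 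Finally I would note that $a>0$, so $\sigma$ is strictly increasing, continuous, and differentiable, which retroactively justifies the regularity of $\sigma$ that was assumed when deriving (\ref{equation:x13}) from the first-order condition (\ref{equation:x12}).

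\textbf{Main obstacle.} The verification itself is routine algebra, so the only genuine content is motivating the affine ansatz and keeping track of $M=(n-1)\eta$. I would remark that after dividing (\ref{equation:x13}) by $M(K-c)$ one obtains a linear first-order ODE whose forcing term (proportional to $\tfrac{c}{K-c}$) together with the $\tfrac{1}{K-c}$ coefficient makes an affine particular solution on $[0,K]$ natural to try; the homogeneous part contributes a term proportional to a power of $(K-c)$, which is discarded because the relevant symmetric equilibrium strategy is the well-behaved (bounded, monotone) branch. Equivalently, one may simply solve the $2\times 2$ linear system obtained from matching the two coefficients, which both reproduces the stated $a$ and $d$ and shows they are the unique affine solution. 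The one place a reader is most likely to stumble is forgetting to substitute $M=(n-1)\eta$; without it the coefficients in (\ref{equation:x14}) would not come out independent of $\eta$.
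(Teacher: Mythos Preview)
Your proposal is correct: the substitution of the affine candidate into (\ref{equation:x13}) and matching of the two coefficients goes through exactly as you outline, and the use of $M=(n-1)\eta$ is indeed the algebraic hinge.

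The route, however, differs from the paper's. The paper does not posit an affine ansatz and verify; it solves (\ref{equation:x13}) from scratch by writing it in the standard linear form $\sigma'(c)+p(c)\sigma(c)=q(c)$, computing the integrating factor $u(c)=(K-c)^A$ with $A=\dfrac{(n-1)M+\eta}{M}$, integrating, and only at the end substituting $M=(n-1)\eta$ to obtain (\ref{equation:x14}). Your approach is shorter and purely verificational, which suffices for the lemma as stated (``the following strategy solves\ldots'') and has the advantage of making the role of $M=(n-1)\eta$ transparent; the paper's approach is constructive and shows where the formula comes from without having to guess affinity in advance. You do gesture at this in your closing paragraph---noting the linear ODE and the homogeneous $(K-c)$-power branch that is discarded---so the two routes are not far apart, but the emphasis is reversed: you verify and then motivate, the paper derives and then simplifies.
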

See Appendix \ref{p_l1} for a proof.

First note that, the solution $\sigma_i$ is indeed an increasing, continuous, and differentiable function. Furthermore, (\ref{equation:x14}) implies that each node offers an RD fee which is quite close to its modified RDPC for large enough $n$. Therefore, if the value of $\eta$ in (\ref{equation:xx1}) is selected properly, we can conclude that for every $i$, $b_i < \overline{c}_i$ which ascertains (\ref{equation:xx601}). In the following Proposition, we derive a condition on the value of $\eta$ by which $b_i < \overline{c}_i$ will be satisfied for every~$i$.

\begin{pr}\label{peoposition:etha}
In order to satisfy the condition $b_i < \overline{c}_i \hspace{2mm} \forall i$, the minimum number of RD queries done by the leader for each client in a MDC of size $n$ should be,
$$
\eta_{min}(n)=\max_{RS_i\in [RS_{min}, RS_{max}]}\eta_{min}(n,RS_i,\lambda) 
$$
where $\lambda$ indicates the upper bound of $m_i$ for all nodes and $RS_i$ denotes the resource status of node $i$. Furthermore, $RS_{min}$ and $RS_{max}$ are the lowest and highest values of resource status of all nodes, respectively.
%Furthermore, when $n > n_0$ (a small positive integer), we have $\eta_{min} < 1$. Hence, there is no constraint on $\eta_{min}$ and $b_i < \bar{c}_i$ holds.
\end{pr}
See Appendix \ref{p_p1} for a proof.
In the numerical result presented in Appendix \ref{p_p1}, we justify that when $n > 10$ and $c_i^M \geq 10 c_i^P$, $\eta_{min} < 1$.
Hence, there is no constraint on $\eta_{min}$ and $b_i < \bar{c}_i$ holds. 
% According to Proposition \ref{peoposition:etha}, we have shown in \cite{x2900} that when $n > n_0$(a small positive integer), we have $\eta_{min} < 1$. Hence, there is no constraint on the minimum number of RDs and $b_i < \bar{c}_i$ holds.

Based on the next Theorem, the strategy identified in (\ref{equation:x14}) is indeed the unique NE for the game. 

\begin{thm}\label{peoposition:sufficiency}
In our proposed multi-player auction, the strategy identified in (\ref{equation:x14}) is the unique symmetric Bayesian Nash equilibrium.
\end{thm}

In the proof, we will argue that if all players except a particular player, say $i$, opt the strategy identified in (\ref{equation:x14}), then the unique strategy which maximizes the payoff of player $i$ is (\ref{equation:x14}). For details of proof, see Appendix \ref{p_t1}.

\begin{thm}
Using the aforementioned leader selection mechanism and considering the assumption of $c_i^P \leq c_i^M$, the payoffs of MDC participants are higher in LORD compared to decentralized model.
\end{thm}

\begin{proof}
The proof is simply resulted from the fact that our proposed leader selection mechanism selects the node with the lowest RD fee. Therefore, according to (\ref{equation:xx601}), the condition of (\ref{equation:xx668}) is satisfied. 
\end{proof}

\begin{pr}\label{energy_saving}
Total consumed energy of nodes is lower in LORD compared to decentralized model if the following condition holds:
\vspace{-2mm}
\begin{align*}
\min_i c_i^M < \dfrac{1- \frac{\eta +1}{\eta(n-1)}}{5+ \frac{2}{\eta} } \sum_{j=1}^n c_j^M
\end{align*}
\end{pr}
See Appendix \ref{p_p2} for a proof. In the numerical result presented in Appendix \ref{p_p2}, we justify that when $n \geq 10$ and $c_i^P \leq c_i^M$, the condition of Proposition \ref{energy_saving} holds. 

In the following subsections, we describe different phases of LORD.
% based on the aforementioned auction. 

\subsection{Phase I}\label{section:Phase1}
The goal of the first phase is to form a MDC including all participants, and to select the node with the lowest RD fee as its leader. In order to implement a multi-player auction among all participants in a way that does not impose significant communication overhead on the network, we implement the auction as a series of two-node interactions.  
First, nodes calculate their bids for RD fee based on (\ref{equation:x14}). Then, they interact with each other in a one-on-one basis. The node that has lower RD fee is considered as the winner of the interaction, and the other node is considered as the loser. In order to implement a simultaneous interaction between two nodes in which no one is able to change its RD fee after observing RD fee of another one, we define the interaction between nodes $i$ and $j$ as follows:
\begin{enumerate}
\item One of the nodes (e.g., node $i$) sends a MDC formation request including $MD(b_i)$ to node $j$, where $b_i$ is the bid of node $i$ and $MD(.)$ stands for a Message Digest function such as MD5 \cite{x30} or SHA-1 \cite{x31}.
\item When node $j$ receives such a request, if it is a client, it forwards the request to its leader. Otherwise, it submits its bid, i.e., $b_j$ to node $i$. Since node $j$ cannot extract $b_i$ from $MD(b_i)$, its offer is independent of $b_i$.
\item Upon receiving such a response, node $i$ submits its bid, i.e., $b_i$ to node \textit{j}. Since node $i$ has previously submitted $MD(b_i)$ to node $j$, it cannot change its bid after observing $b_j$. 
\item The node submitting lower bid is recognized as a leader. The other node sends the list of its clients (if any) to the leader. 
\item The loser node informs its clients (if any) of their new leader. The leader adds the loser node (and its clients) to the list of its clients.

\end{enumerate}
The aforementioned process continues among the winners of previous interactions, i.e., remaining leaders. Such a process continues until the number of participants reaches $n$. In this case, the single remaining leader is recognized as the leader of the MDC. Then, the management procedure enters phase II.

It should be noted that nodes have no incentive to change their bid for RD fee after an interaction. This is due to fact that if one node increases its RD fee, none of its clients accept this node as its leader anymore. If such a node decides to increase its RD fee only for new incoming clients, since the amount of RD should be identical for all clients, the TU punishes this node (if this node is selected as the single leader of the MDC) when it wants to receive its credits by reporting its voucher to the TU. In addition, this node has no incentive to decrease its RD fee because it reduces its benefit based on (\ref{equation:xx1}).

\subsection{Phase II}\label{section:phase2}
In the second phase of LORD, the selected leader begins its task and assumes the leadership role for one time-slot. %Furthermore, such a leader is responsible for performing a multi-player auction among all participants in order to select a new leader for the next time-slot. 
Before the end of the current time-slot, the leader should perform an auction among all participants including itself.
%Selecting a new leader by the current leader of MDC decreases the communication overhead imposed on the network in comparison with the leader selection process of phase I. 
%First, we explain the process of leader selection, and illustrate it through an example. Then, we analyze the performance overhead of our proposed algorithm.

%\subsubsection{Leader Selection Algorithm}
%In phase II, the current leader of the MDC is responsible for performing a multi-player auction among all participants in order to select a new leader for the next time-slot. 
%Before the end of the current time-slot, the leader should perform an auction among all participants including itself. Since the current leader of the MDC has the responsibility of performing the auction for the next time-slot, such a node can disturb our leader selection process in order to select itself as the next leader by manipulating the values offered by other nodes. In order to avoid such misbehavior, as illustrated in Fig. \ref{figure_5}, we consider a period of time at the end of each auction process where nodes can claim that their bids are manipulated by the current leader. We also consider a consecutive period in which such claims are evaluated by other participants. At the end of the current time-slot, the winner of previous auction assumes leadership role for the following time-slot.
%
%\begin{figure}
%\begin{center}
%\includegraphics[width=1.8in,keepaspectratio]{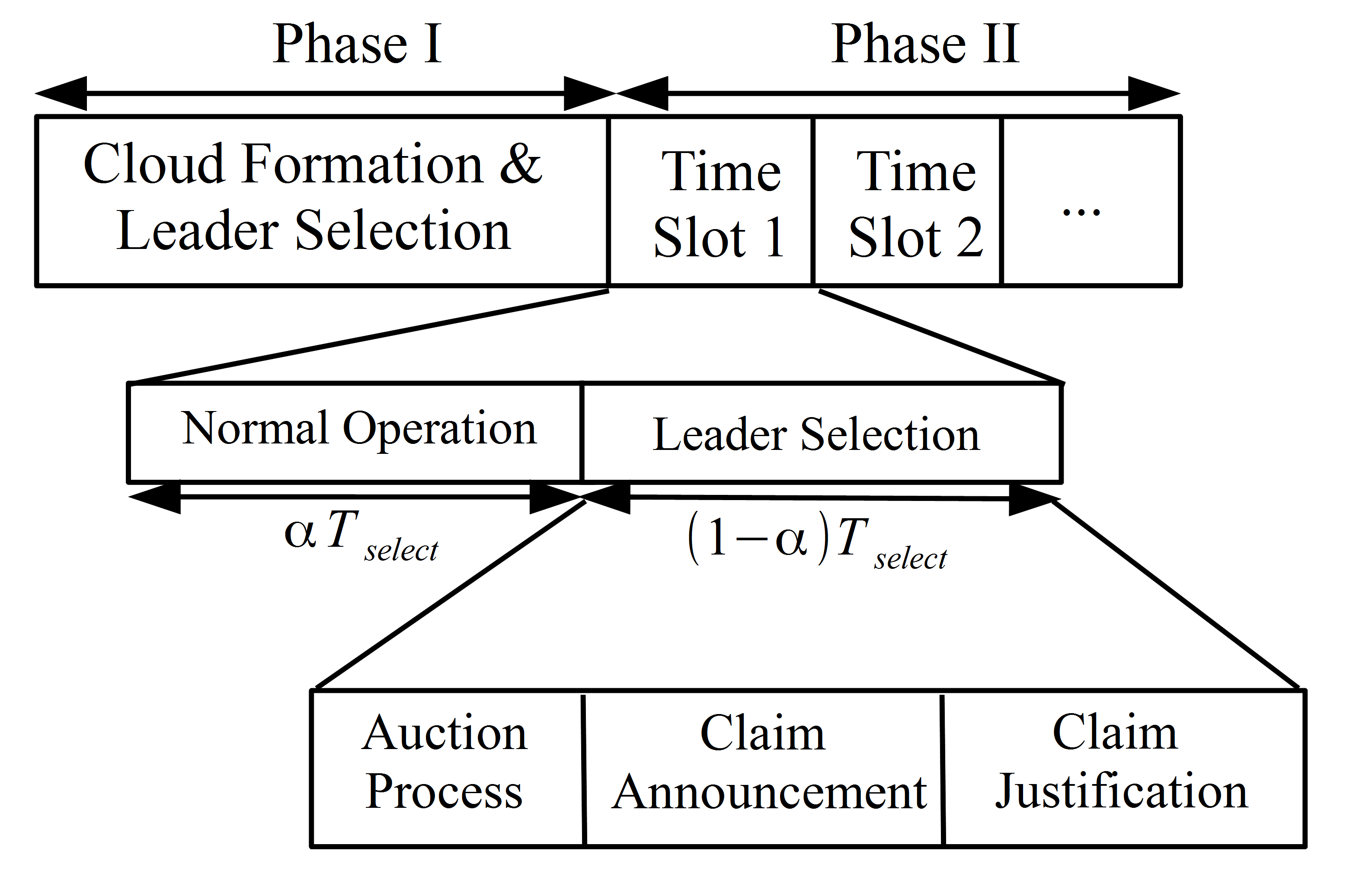}
%\caption{Mobile device cloud management timing}
%\label{figure_5}
%\end{center}
%\end{figure}

In order to implement the auction as a simultaneous game, we define the process of leader selection as follows:

\begin{enumerate}
\item The current leader of the MDC computes the hash value of its bid using a message digest function ($MD(.)$), sends it to all clients, and asks them for their bids.
\item Each client sends its bid to the leader regardless of leader's offer because it cannot reveal the bid of leader.
\item Upon receiving the bids from all clients, the current leader sorts nodes based on the values of their bids. The node which has the lowest bid is selected as the leader and the value of its bid is assigned as its RD fee. The current leader sends the name of the selected leader and its RD fee, as well as the list of other nodes including itself, and their bids to all clients. 

Each client that has received such a message, first calculates the hash of the current leader's bid using $MD(.)$. Then, the result is compared with the one offered by the current leader at the beginning of this auction. The leader does not change its bid because it knows that the clients can find out if it does so.
If the values are identical, the result of auction is valid. Therefore, such a node considers the selected leader as the new leader for the next time-slot.

The selected leader considers itself as the leader of the MDC for the next time-slot.
\end{enumerate}

Based on this process, since the current leader has sent the hash value of its bid at the first step, it cannot change its bid after receiving the values offered by clients. If it does so, all clients will notice and will put the name of such a leader in their blacklists. As a punishment, such a node is deprived of further participation in the MDC. If we consider a MDC with \textit{n} nodes, the total number of messages required in the leader selection is equal to $3(n-1)$, among which $2(n-1)$ number of messages is related to the leader. 

Note that the current leader can misbehave in the leader selection process in two ways. First, it can manipulate the values offered by others in order to select itself as the next leader. Second, it can deviate from performing the auction for selecting a new leader. In order to address these issues, suitable punishment policy such as exclusion from the MDC can be used in order to make sure that the leader do not misbehave. For details of punishment policy, see Appendix \ref{punishment}.

\subsection{Entrance or Departure of Nodes}\label{section:entrance}
After introducing our leader selection process, we focus on entrance and departure of nodes. When a node intends to join the MDC by switching its resource sharing feature on, it sends a message to the leader of MDC. When the leader receives such a message, it invites the entering node to wait until the next leader selection process. At the next leader selection process, all nodes modify the number of participating nodes to $n+1$ and the new node can enter the MDC.
%
%If the size of MDC is equal to $n$, the leader does not accept the inquiry of the entering node. The reason is that, according to the proposed RDPC function given in (\ref{equation:xx8}), the RDPC of the leader  increases since the total number of clients increases. On the other hand, the number of RDs the leader presents and the RD fee is unchanged. Thus, adding a new client may result in a negative payoff. Therefore, the leader invites an entering node to wait until the next leader selection process. At the next leader selection process, all nodes modify the number of participating nodes to $n+1$ and the new node can enter the MDC. If the size of MDC is less than $n$ (i.e., some of the participants have left the MDC) and the departed node(s) have not used their RD quotas completely, entrance of the new node can increase the payoff of leader since the leader can present RD for such a node and receive payment in return. Therefore, the leader permits the new node to enter the MDC, and informs this node of the RD fee. 

The departure event may also occur due to a number of reasons such as mobility and battery depletion. If the departing node is not the leader, the MDC continues its regular function. The leader will remove the name of the departed node from the list of clients. 
In the case of a departing leader, the solution is to start from phase I and perform a leader selection in order to select a new leader for the MDC.

\section{Transaction Protocol}
\label{trans_protocol}

%In this section, we present a transaction protocol among the leader, RD requesting nodes, and resource providers.
According to LORD, for each RD, the RD requesting node should give the leader a voucher of value RD fee. However, without a secure transaction protocol, an RD requesting node can cheat by not sending the voucher to the leader when it receives RD from the leader. As a result, a protocol is required to guarantee that no party can cheat during the RD procedure. This issue has been addressed well in the literature of fair-exchange protocols (e.g., \cite{x500}, \cite{x502}, \cite{x25}), especially for wireless environments. Regardless of the differences among such protocols, there are two involved parties (called Alice and Bob), and one entity as a mediator (Trusted Third Party (TTP)) in ``optimistic" class of these protocols. Although not repeated here, such protocols guarantee the fair exchange between involved parties. By the assistance of these secure protocols, we present the transaction protocol among the leader, the RD requesting node, and the resource provider as follows.

When an RD requesting node sends a request to the leader to find a resource provider for $RT_t$ (Resource Type $t$), the RD requesting node, the leader, and the resource provider execute the following protocol collectively: 
\begin{enumerate} 
\item Resource requesting node sends an RD request for $RT_{t}$ to the leader. 
\item When the leader receives such a message, it searches its database for a proper resource provider for the specified resource. If the leader cannot find a proper node for that type of resource, it sends RD failure message to the RD requesting node. On the other hand, if the leader finds a resource provider, it sends a message verified by resource provider to the RD requesting node indicating that the proposed resource provider can provide $RT_{t}$. The fact that the leader should send a message confirmed by a resource provider to the RD requesting node prevents the leader from cheating and introducing a fake resource provider to the RD requesting node.

\item When one node (as a resource provider) receives such a message, if it can provide the requested resource, it sends a confirmation message to the leader.
\item Upon receiving the verification from the resource provider, the leader should introduce the resource provider to the RD requesting node and in return receive the voucher from this node. Considering the leader, the RD requesting node, and the TU in our framework as Alice, Bob, and the TTP in the optimistic fair-exchange protocols, respectively, such protocols can ensure a secure transaction between the leader and the RD requesting node. 

\end{enumerate}

Considering the proposed protocol, neither the leader nor the RD requesting node can cheat. Hence, in practice, no node has incentive to cheat and there is no need for the TU to mediate in such situations.
%It should be noted that in optimistic fair-exchange protocols, the involvement of the TTP is minimal. However, our objective is to implement our framework in a distributed manner without assistance of any management unit outside of the MDC. We can consider a suitable punishment policy to prevent the leader from cheating without the involvement of the TU \cite{x2900}.

%In order to address this issue, we consider a punishment policy for the situations in which the leader behaves unfairly in the transaction protocol. In such cases, when the RD requesting node refers to the TU for mediation, in order to punish the leader, the TU deducts the voucher's value from the account of RD requesting node; however, only a fraction $\beta$ of voucher's value (where $0<\beta<1$) is transferred to the leader's account. Such a punishment prevents the leader from cheating the RD requesting node. As a result, the leader always complies with the rules of protocols and there is no need for the TU to mediate in such situations. This reconfirms our earlier assumption that the perpetual existence of the TU is not required in our proposed system architecture.
\section{Performance Evaluation} 
%
%In this section, first, we introduce our simulation environment. Then, we evaluate the performance of our management framework under various conditions.  
%
%\subsection{Simulation Environment}
%
%In order to simulate our proposed algorithms and protocols, we have used a simulation platform based on Qt Application Framework \cite{x32}, \cite{x3202}. Such a simulation platform enables users to simulate management algorithms and protocols that require communication among mobile nodes at higher layers of network management such as transport layer. In addition, use of such a platform leads to a flexible environment for verifying the algorithms and protocols prior to a real implementation. %Furthermore, by using such a simulation platform, users can create real applications for several operating systems, based on the codes they have written for simulation purposes. 
%
%\subsection{Simulation Results}
\subsection{LORD Performance}
We compare LORD with decentralized model for RD. To this end, we consider a MDC including 10 nodes \footnote[1]{As our analytical analysis (Propositions 1 and 2) indicates LORD works for large number of nodes. Limiting the number of nodes to 10 is only for the sake of clarity of results presentation.} where each one is randomly assigned an initial energy level in the interval $(0,100)$ joles. Then, based on the initial energy levels, modified RDPC of each node is calculated in the interval $(0, 1)$ joles (the higher the initial energy level, the lower the modified RDPC). For example, if the initial energy level of node $i$ is 60 joles, the modified RDPC of this node is 0.4 joles. Then, considering the relationship between messaging and processing costs of node $i$ as $c_i ^M = \theta c_i ^P$, $c_i ^P$ and $c_i ^M$ can be calculated using (\ref{equation:xx2232}). By substituting the value of $c_i ^M$ in (\ref{equation:xx2233}), modified RDC of node $i$ can be computed. 
We assume $T_{select}$ is equal to 2 minutes in LORD, and the leader provides 3 RDs for each node at every time-slot (i.e., $\eta=3$). Similarly, we assume that each node performs 3 RDs itself every 2 minutes in decentralized model for RD. Furthermore, for simplicity, we assume $m_i = \eta =3 \hspace{2mm}$ for all $i$. 

In order to compare LORD with decentralized model, we consider three general cases, $c_i ^{M} = c_i ^{P}$, $c_i ^{M} = 10c_i ^{P}$, and $c_i ^{M} = 0.1c_i ^{P}$.

%First, we compare our proposed architecture model for RD, i.e., leader-based model, with decentralized model for RD where each node performs RD independently. In order to do the comparison, we consider a cloud including 10 nodes where each one is randomly assigned a level of initial energy. Based on the level of energy, each node has two costs: one reflecting its cost of providing an RD in our leader-based RD model in case it assumes the leadership role (known as RDPC), and another one representing its cost of performing an RD in the decentralized RD model (known as RDC). Specifying the proper range of the values for RDPC and RDC requires practical experiments which are beyond the scope of this paper. For simplicity, in this paper, we assume that the RDPC value of nodes lie in the interval (0, 10), and complying with our assumption that the RDPC of each node is smaller than its RDC, the RDC value of nodes is considered to lie in the interval (3, 15). Furthermore, we assume $T_{select}$ is equal to 5 minutes in our RD model, and the leader provides 5 RDs for each node at every time-slot (i.e., $\eta=5$). Similarly, we assume that each node performs 5 RDs itself every 5 minutes in decentralized model for RD.
\begin{enumerate}
\item Fig. \ref{figure7} compares LORD with the decentralized model when $c_i ^{M} = c_i ^{P}$. In Fig. \ref{figure7.a} and Fig. \ref{figure7.b}, we have plotted the percentage of nodes' energy over time. Fig. \ref{figure7.a} indicates that LORD is able to balance the energy consumption among all nodes. However, the energy consumption of nodes is unbalanced in the decentralized RD model (Fig. \ref{figure7.b}), leading to earlier death of some nodes. Fig. \ref{figure7.c} compares the percentage of alive nodes in LORD with that of the decentralized model. We consider
a node to be alive as long as it has not depleted its energy.
This figure demonstrates that overall lifetime of MDC participants is higher in LORD. In Fig. \ref{figure7.d}, we compare the improvement in payoff of the participating nodes in LORD in comparison with that of in the decentralized model for RD. This figure represents that participating nodes always have higher payoffs in LORD. Furthermore, if we sort nodes based on their energy levels after 10 minutes ($EL$) from Fig. \ref{figure7.a}, we have $EL_3>EL_7>EL_8>EL_{10}>EL_1>EL_2>EL_9>EL_{5}>EL_4>EL_6$; and if we sort nodes based on their improved payoff after 10 minutes ($\Delta U$) from Fig. \ref{figure7.d}, we have $\Delta U_3<\Delta U_7<\Delta U_8<\Delta U_{10}<\Delta U_1<\Delta U_2<\Delta U_9<\Delta U_{5}<\Delta U_4<\Delta U_6$. This indicates that LORD is more lucrative for nodes with lower levels of energy.

\begin{figure}%[!htb]
    \centering
  \subfigure[Energy level of nodes in LORD]
  {
       \includegraphics[width=1.62in,keepaspectratio]{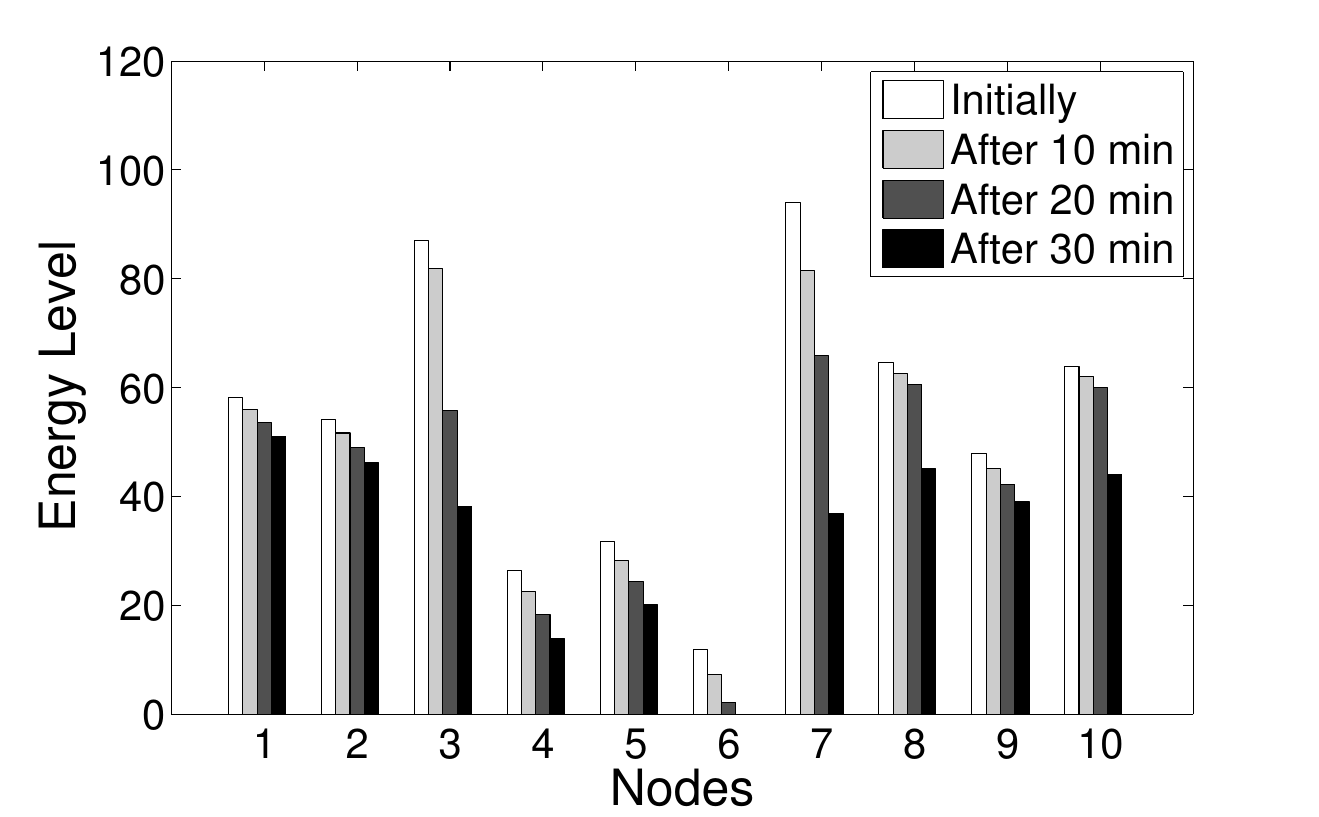}
      \label{figure7.a}
    }
   \subfigure[Energy level of nodes in decentralized model for RD]
    {
        \includegraphics[width=1.62in,keepaspectratio]{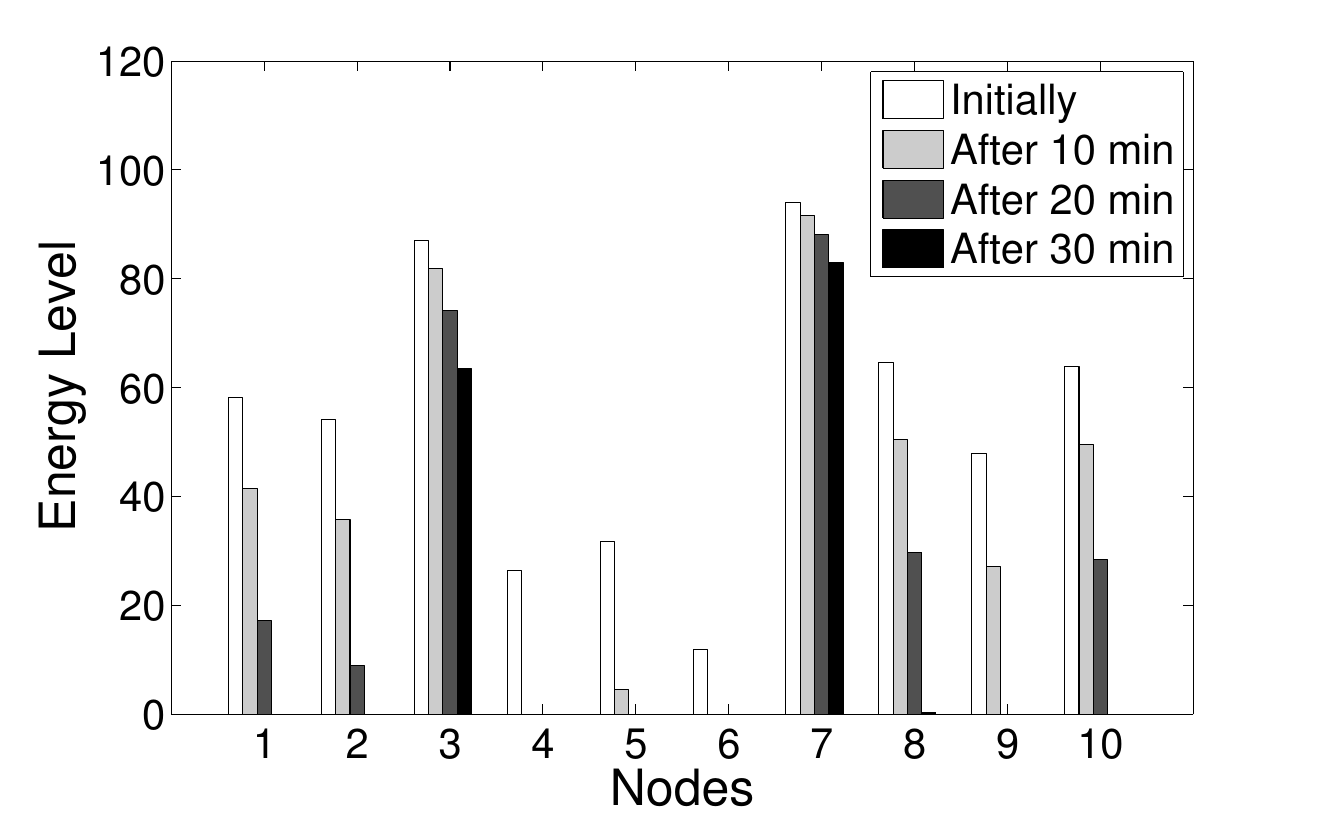}
      \label{figure7.b}
   }
      \subfigure[Percentage of alive nodes]
    {
        \includegraphics[width=1.62in,keepaspectratio]{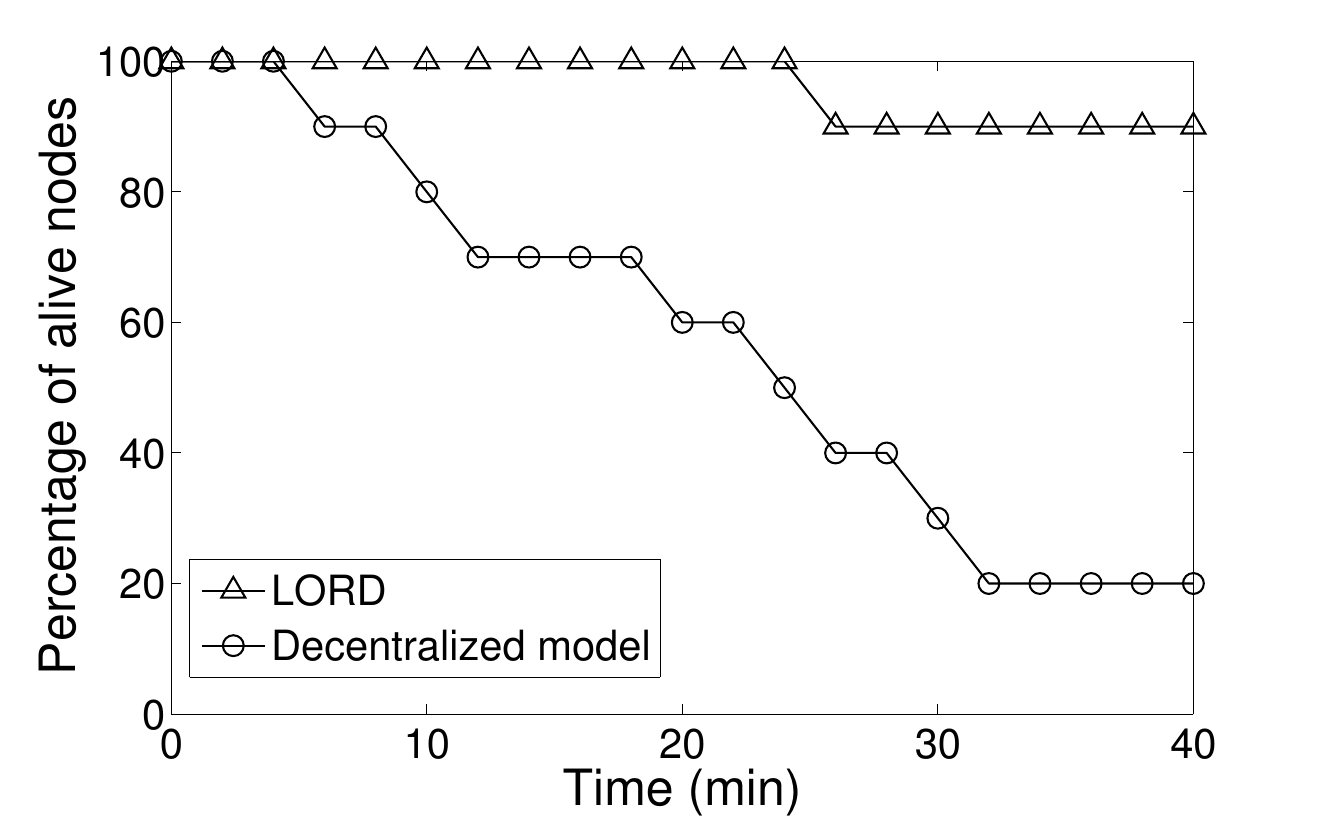}
      \label{figure7.c}
   }
   \subfigure[Improved payoff for LORD in comparison with decentralized model]
    {
        \includegraphics[width=1.62in,keepaspectratio]{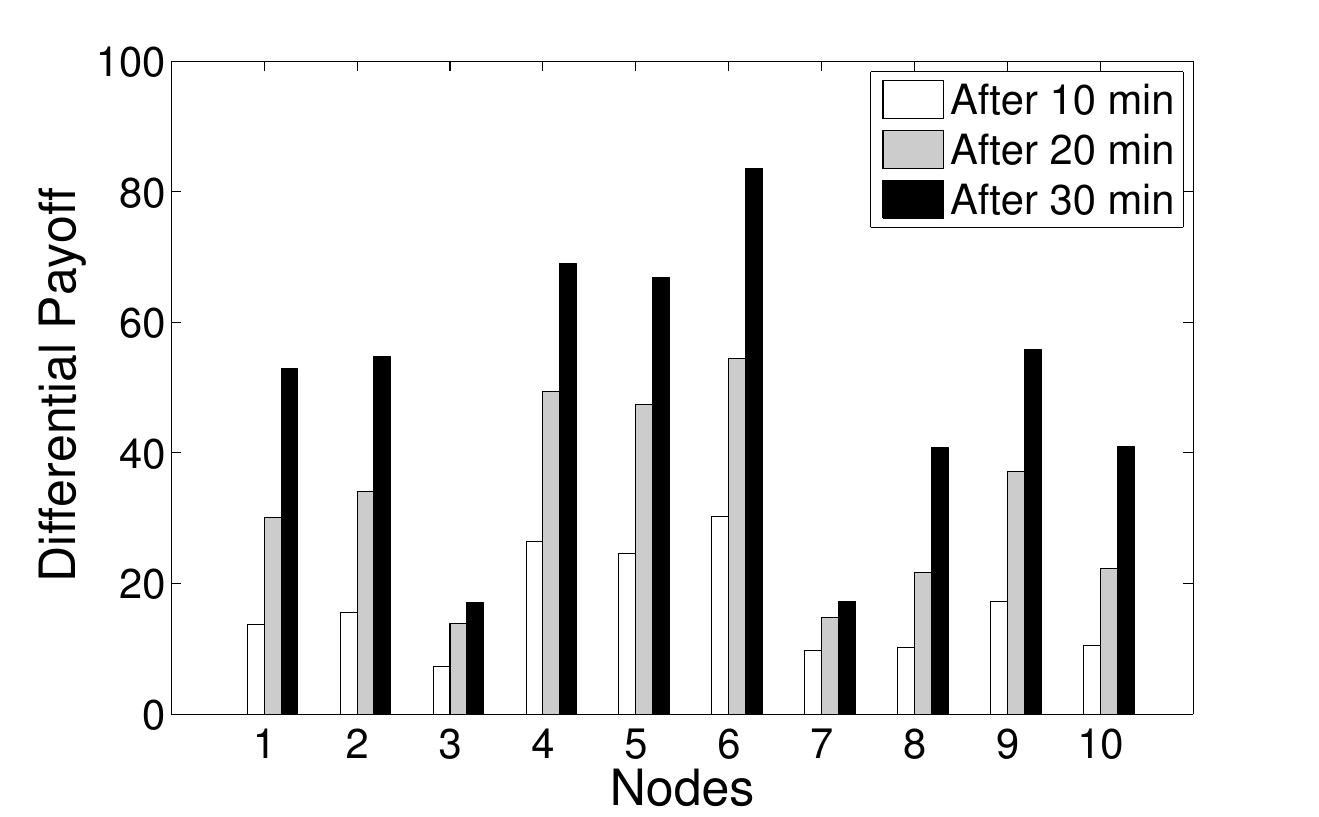}
      \label{figure7.d}
   }
 \caption{Comparison of LORD with the decentralized model for $\theta = 1$, i.e., $c_i ^{M} = c_i ^{P}$}
\label{figure7}
\end{figure}

\item Fig. \ref{figure8} compares LORD with the decentralized model for $c_i ^{M} = 10c_i ^{P}$. Similar to case $c_i ^{M} = c_i ^{P}$, Fig. \ref{figure8.a} and Fig. \ref{figure8.b} indicate that LORD is able to balance the energy consumption among all nodes while decentralized RD model is unable to do so. Comparison of Fig. \ref{figure8.c} with Fig. \ref{figure7.c}, and Fig. \ref{figure8.d} with Fig. \ref{figure7.d} demonstrates that the overall lifetime and improvement in payoff of the participating nodes are higher in LORD than these values in the decentralized model when $c_i ^{M} = 10c_i ^{P}$.

\begin{figure}%[!htb]
    \centering
  \subfigure[Energy level of nodes in LORD]
  {
       \includegraphics[width=1.62in,keepaspectratio]{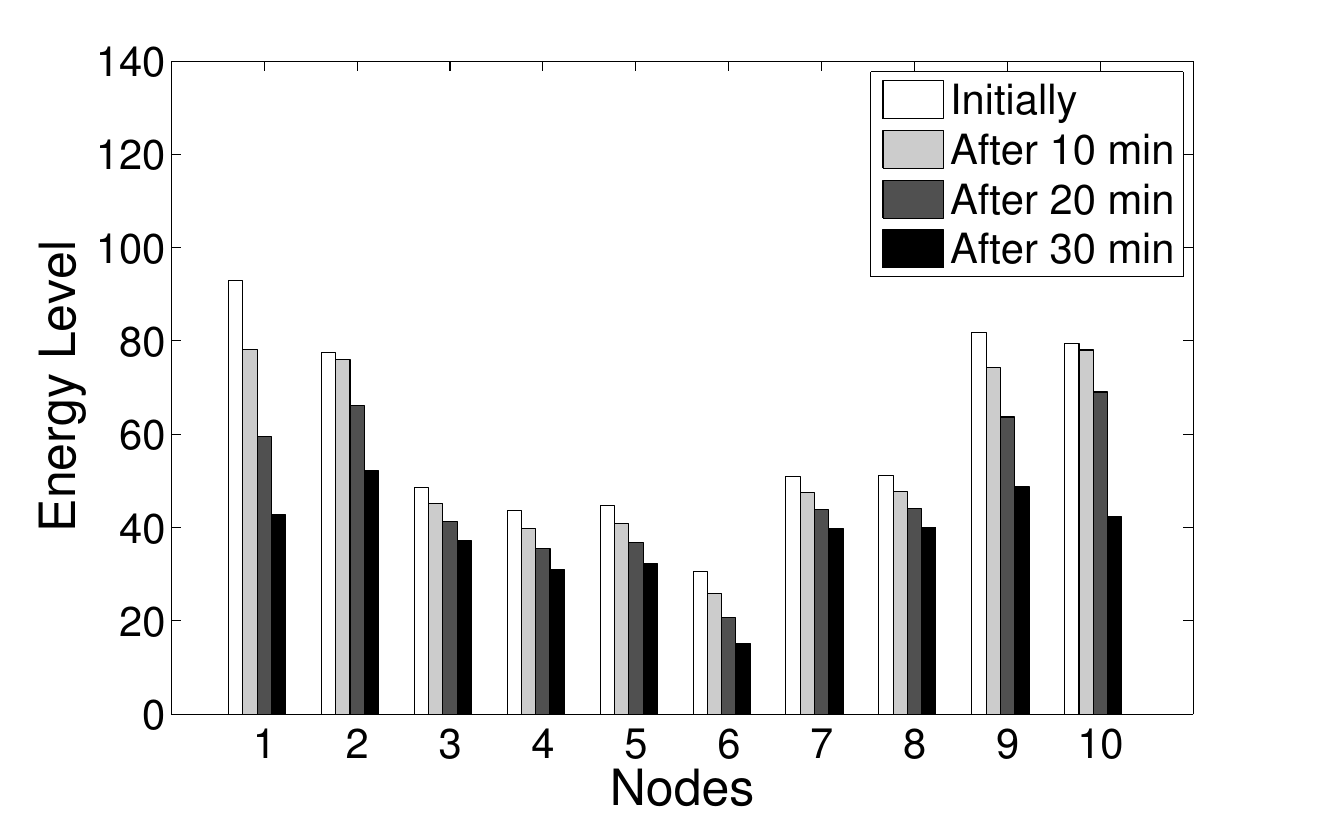}
      \label{figure8.a}
    }
   \subfigure[Energy level of nodes in decentralized model for RD]
    {
        \includegraphics[width=1.62in,keepaspectratio]{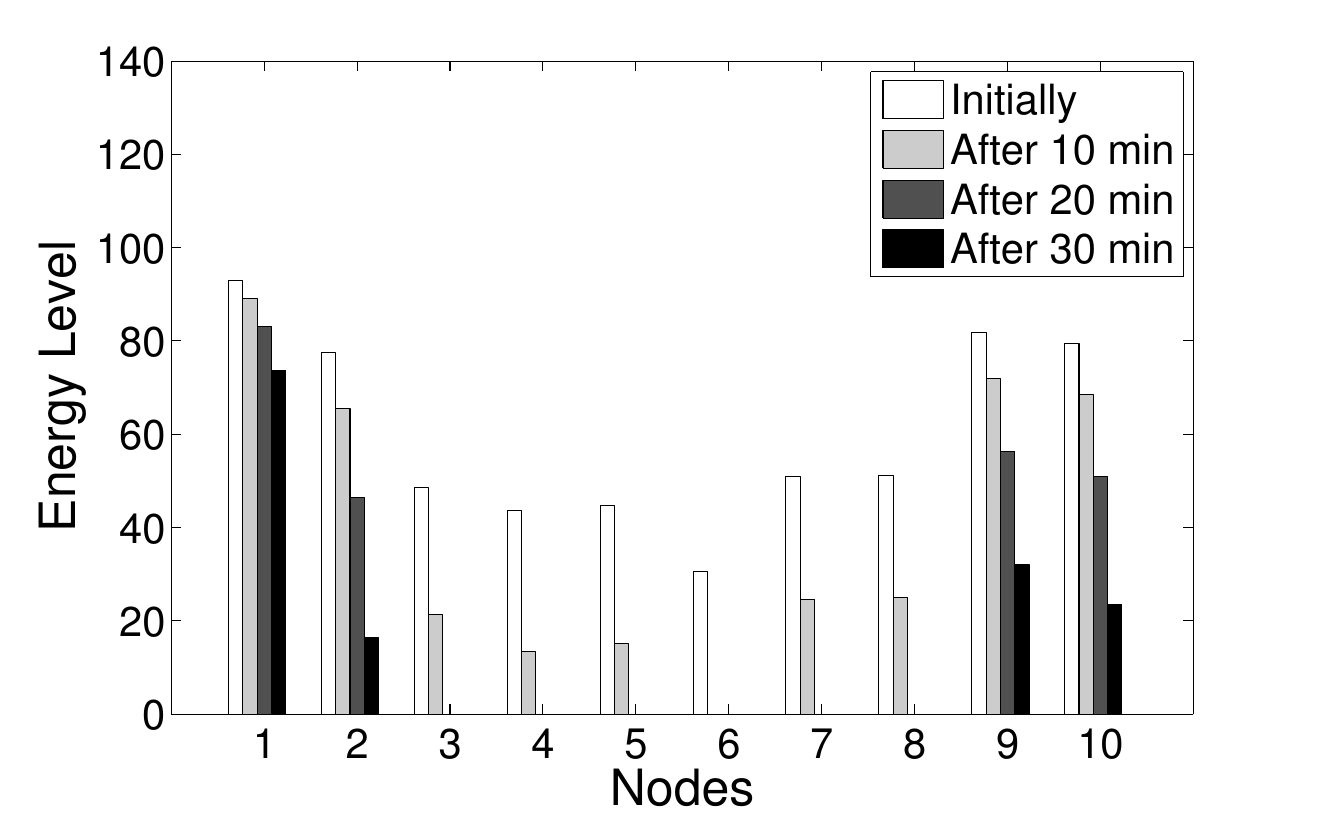}
      \label{figure8.b}
   }
      \subfigure[Percentage of alive nodes]
    {
        \includegraphics[width=1.62in,keepaspectratio]{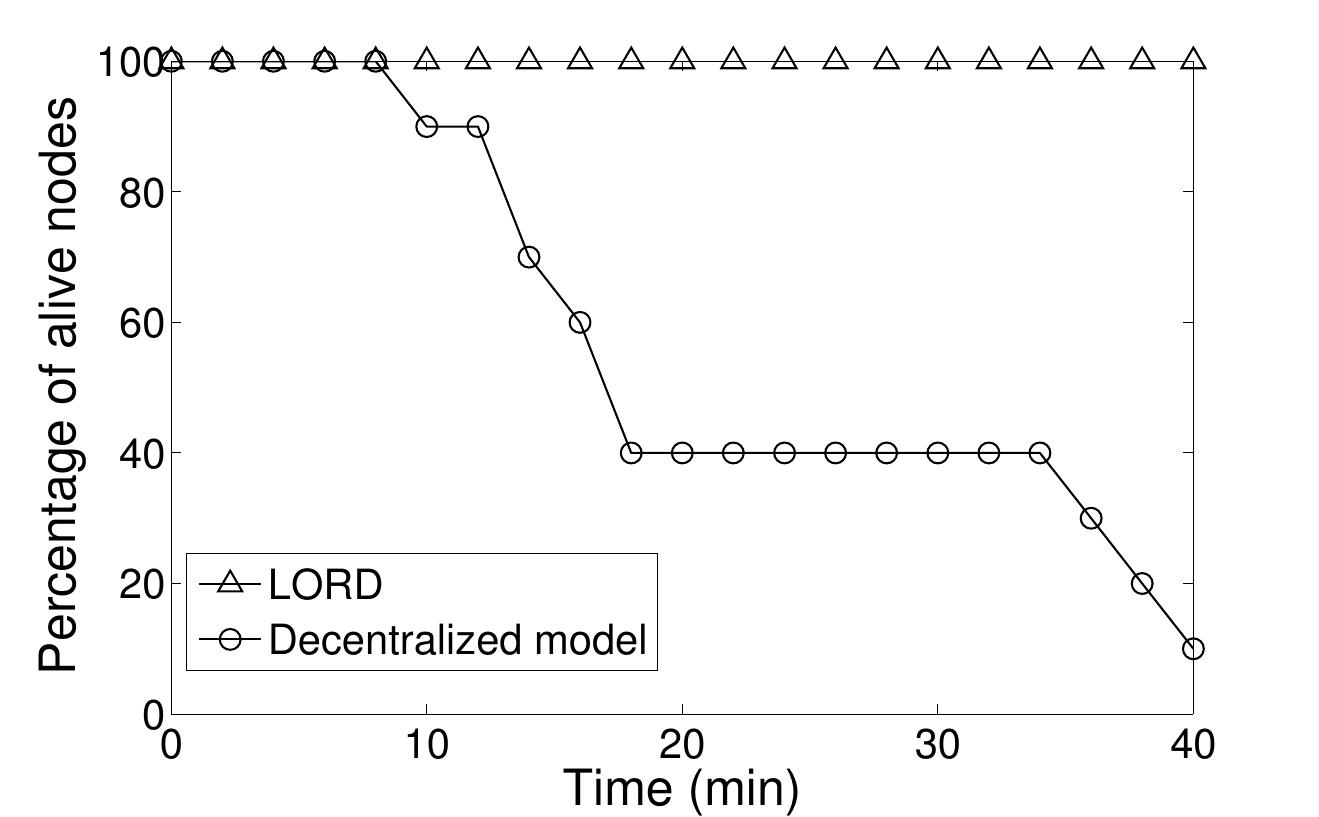}
      \label{figure8.c}
   }
   \subfigure[Improved payoff for LORD in comparison with decentralized model]
    {
        \includegraphics[width=1.62in,keepaspectratio]{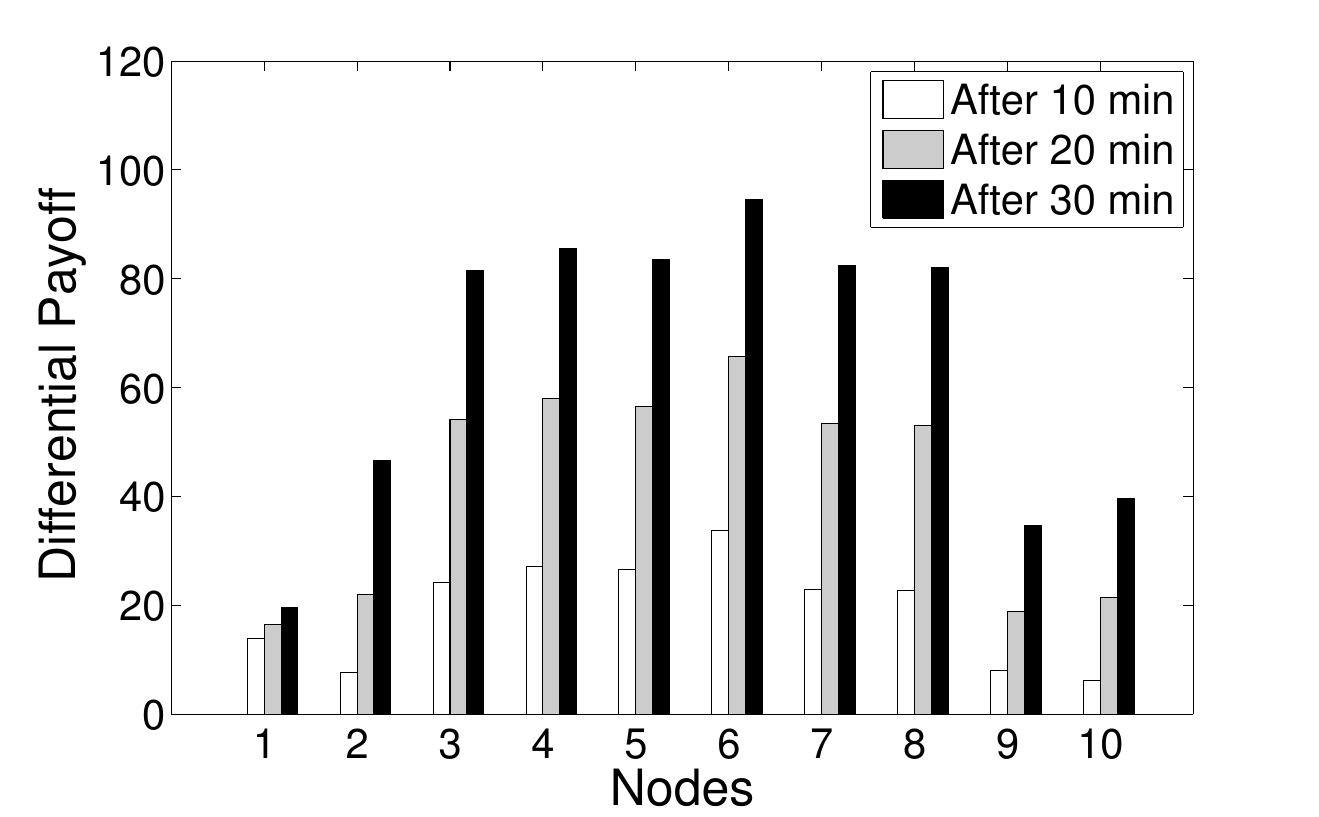}
      \label{figure8.d}
   }
 \caption{Comparison of LORD with the decentralized model for $\theta = 10$, i.e., $c_i ^{M} = 10c_i ^{P}$}
\label{figure8}
\end{figure}

\item Fig. \ref{figure9} compares LORD with the decentralized model for $c_i ^{M} = 0.1c_i ^{P}$. In this case, messaging cost is negligible in comparison with processing cost. Since RD in decentralized model imposes only messaging cost while RD in LORD imposes both processing and messaging costs, Fig. \ref{figure9.a}, Fig. \ref{figure9.b}, and Fig. \ref{figure9.c} indicate that nodes die earlier in LORD in comparison with decentralized RD model because RD in LORD imposes more cost on nodes. Furthermore, Fig. \ref{figure9.d} indicates that the payoff of nodes is lower in LORD in comparison with the payoff of nodes in decentralized RD model.

\begin{figure}%[!htb]
    \centering
  \subfigure[Energy level of nodes in LORD]
  {
       \includegraphics[width=1.62in,keepaspectratio]{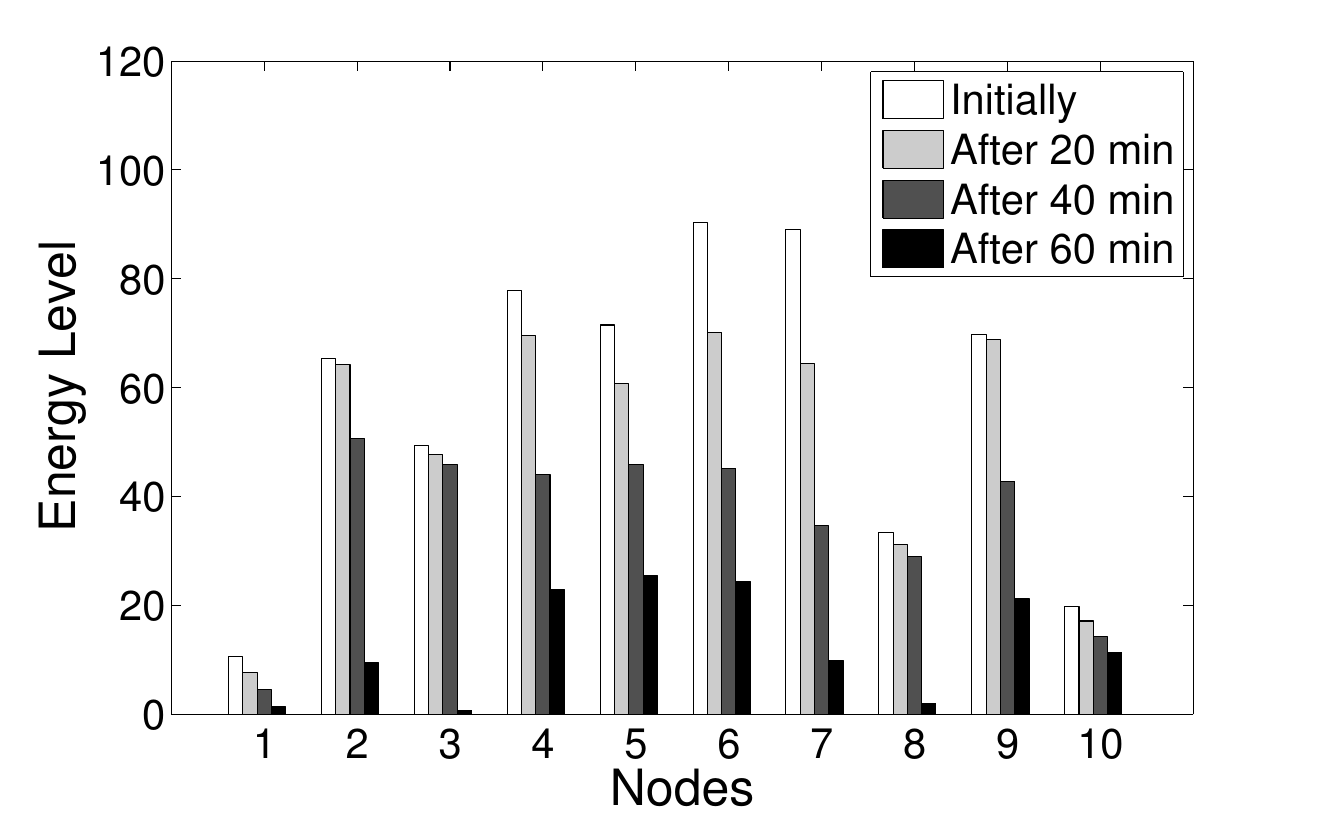}
      \label{figure9.a}
    }
   \subfigure[Energy level of nodes in decentralized model for RD]
    {
        \includegraphics[width=1.62in,keepaspectratio]{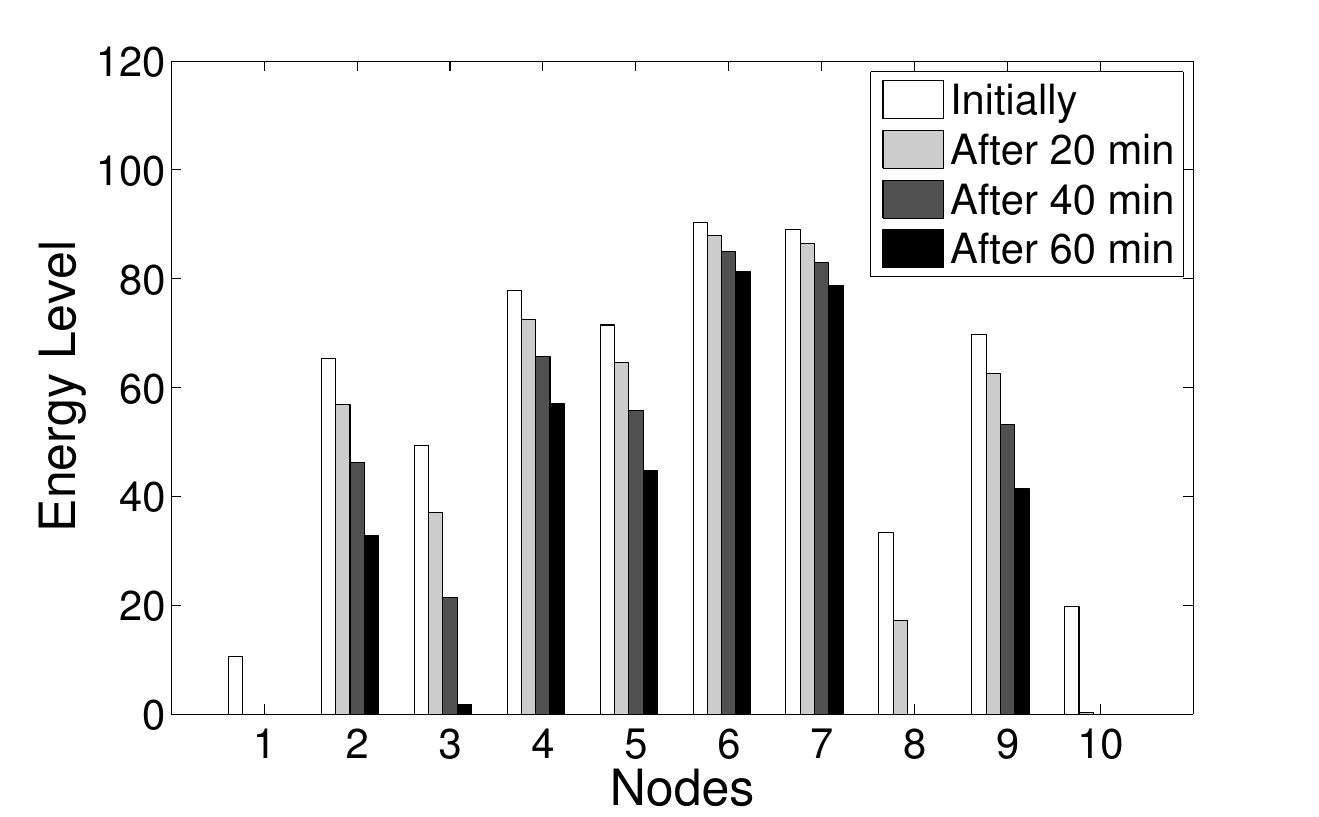}
      \label{figure9.b}
   }
      \subfigure[Percentage of alive nodes]
    {
        \includegraphics[width=1.62in,keepaspectratio]{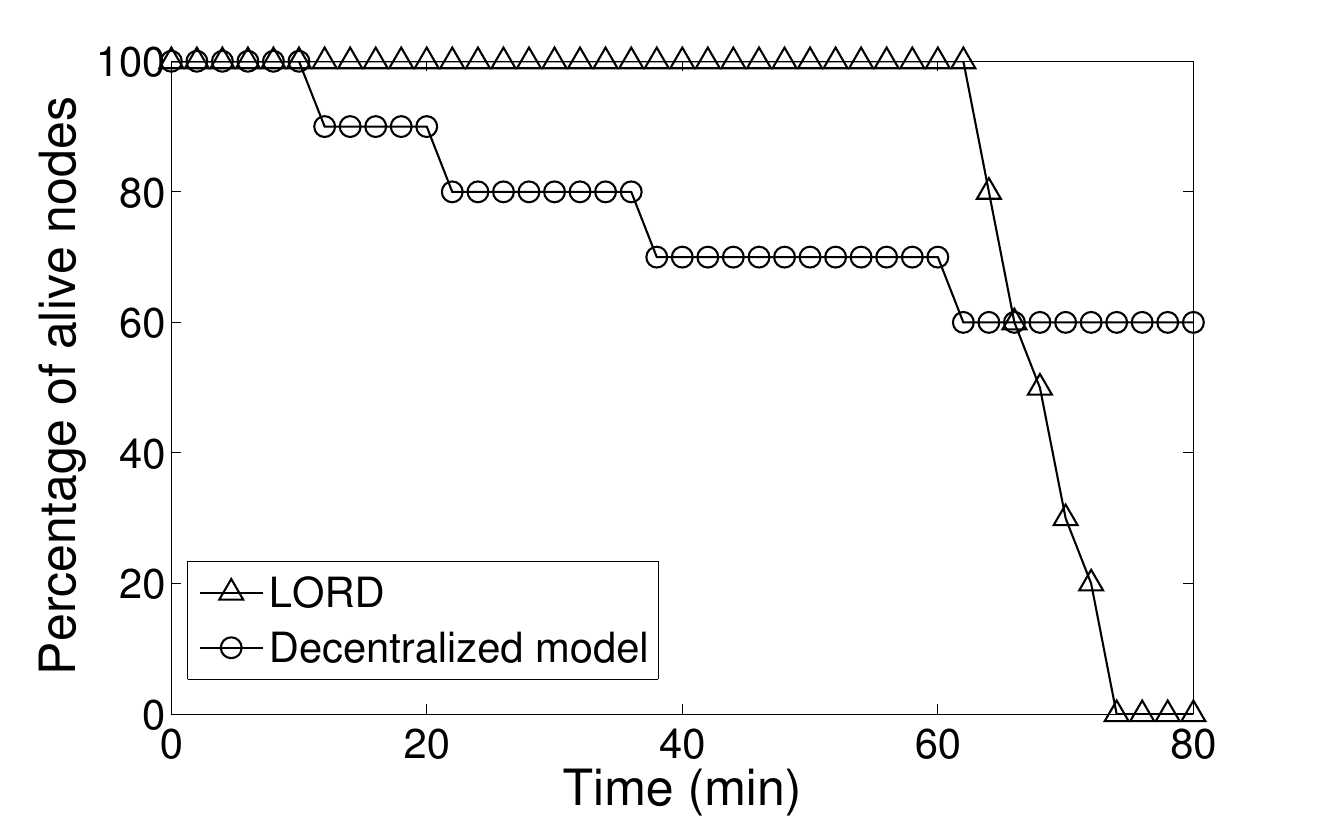}
      \label{figure9.c}
   }
   \subfigure[Improved payoff for LORD in comparison with decentralized model]
    {
        \includegraphics[width=1.62in,keepaspectratio]{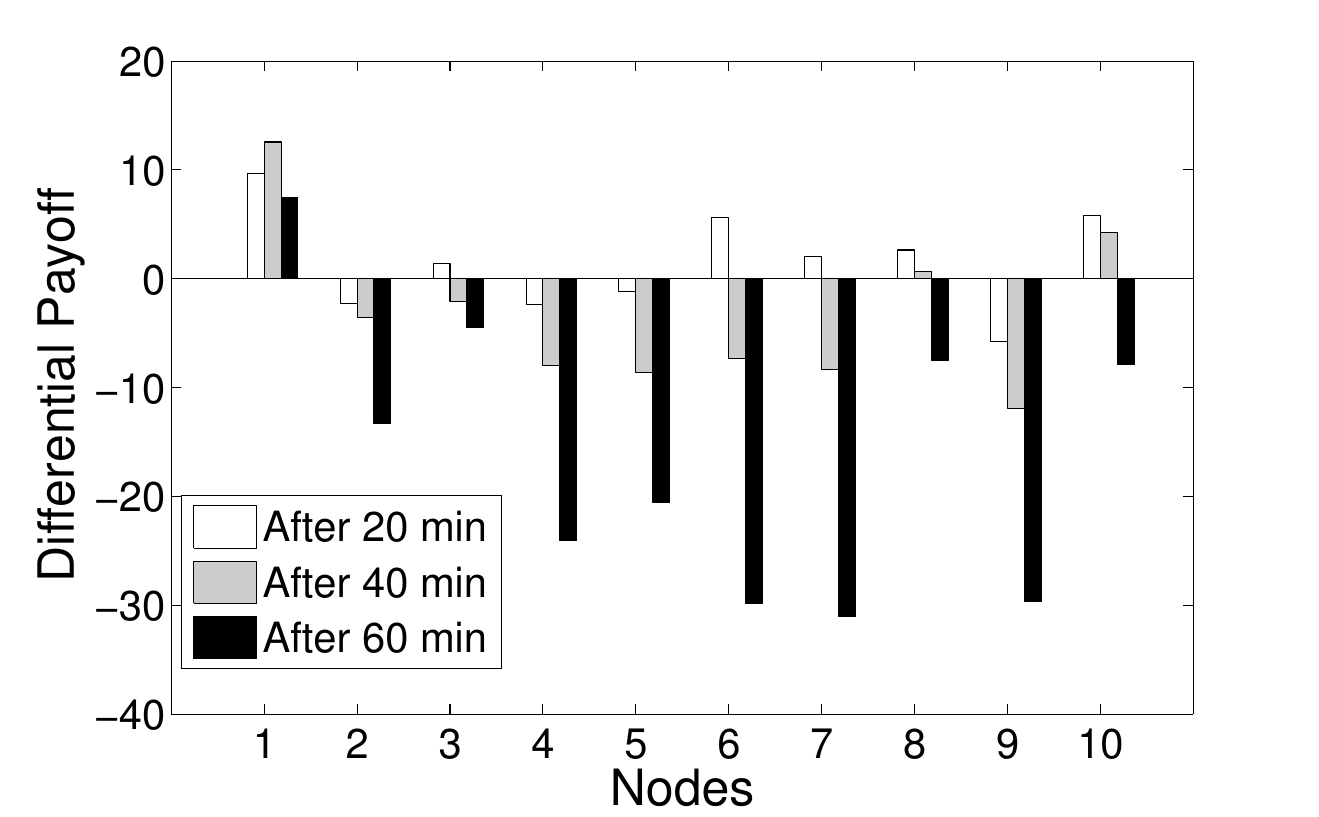}
      \label{figure9.d}
   }
 \caption{Comparison of LORD with the decentralized model for $\theta = 0.1$, i.e., $c_i ^{M} = 0.1c_i ^{P}$}
\label{figure9}
\end{figure}

\end{enumerate}

Considering the aforementioned cases, if $c_i ^{M} \geq c_i ^{P}$, we can conclude that,
\begin{enumerate}
\item LORD ascertains our goal in balancing the energy consumption among participating nodes while the energy consumption of nodes is unbalanced in decentralized RD model.
\item LORD increases the overall lifetime of nodes in comparison with the decentralized RD model. Since in a MDC nodes provide resource for each other, as the percentage of alive nodes is higher, the probability of finding a proper resource provider increases. However, death of several nodes due to unbalanced energy consumption in the decentralized model decreases the probability of performing a successful RD.
\item Nodes prefer participation in MDCs where RD is based on LORD to the ones where RD is carried out by each node individually (i.e., decentralized RD model) since they have higher payoffs in LORD. Furthermore, the improvement in payoff of nodes increases over time since as the energy level of nodes decrease, performing RD by nodes themselves imposes much more cost on them. In addition, LORD is even more lucrative for nodes with lower levels of energy.

\item As the amount of $c_i ^{M}$ increases in comparison with $c_i ^{P}$, the advantage of LORD is more significant in comparison with decentralized RD model.  
\end{enumerate}

On the other hand, since the messaging cost is negligible when $c_i ^{M} < c_i ^{P}$, it is better for nodes to perform RD themselves instead of participating in a MDC where RD is based on LORD.

%\subsubsection{Communication Overhead Analysis}
%We evaluate the communication overhead imposed on the network in the leader selection algorithms proposed for phases I and II. Fig. \ref{figure 10} indicates the imposed communication overhead for different number of nodes. It can be seen that the number of messages required in the leader selection of phase I and II are $5.9(n-3.25)$ and $3(n-1)$, respectively. Therefore, the results of simulation confirms our analysis about the imposed overhead, and indicates that the total number of messages required for the leader selection in both phases is in the order of $n$.
%
%\begin{figure}
%\centering
%      \includegraphics[width=3.4in,keepaspectratio]{}
%      \caption{Communication overhead imposed on the network}
%\label{figure 10}
%\end{figure}
\subsection{LORD Communication Overhead Analysis}
We evaluate the communication overhead imposed on the network in the leader selection algorithms proposed for phases I and II. 
\subsubsection{Overhead Analysis: Phase I} \label{Performance Overhead_1}
The leader selection process proposed for Phase I consists of a series of two-node interactions. At each interaction, one node (loser of the interaction) is added to the list of clients. Therefore, if we consider a MDC with $n$ nodes, the number of required interactions is $n-1$. Each proposed interaction has five steps as explained in Section \ref{section:Phase1}. 
%In the first one, node $i$ sends a MDC formation request including $MD(b_i)$ to node $j$. Then, in the next three steps, node $j$ sends its bid to node $i$, node $i$ sends its bid to node $j$, and the loser of interaction sends the list of its clients to the winner. Finally, the loser informs its clients of their new leader. 
Each of the first four steps requires one message and therefore, the total number of messages of the first four steps in all $n-1$ interactions is $4(n-1)$. However, the number of messages of step 5 depends on the number of clients of the loser node which itself depends on the arrangement of earlier two-node interactions. If we denote the total number of messages of last step in all $n-1$ interactions by $n_l$, total number of messages required in the leader selection procedure is equal to $4(n-1)+ n_l$. Although $n_l$ cannot be calculated exactly, its average value can be approximated by performing enough number of simulations as indicated in Fig. \ref{figure 10}.

\subsubsection{Overhead Analysis: Phase II}
The leader selection process proposed for Phase II consists of one multi-player auction which has three steps as explained in Section \ref{section:phase2} where each step requires $n-1$ number of messages. As a result, if we consider a MDC with \textit{n} nodes, the total number of messages required in the leader selection is equal to $3(n-1)$.
\subsubsection{Overhead Analysis: Comparison of Phase I and Phase II}
Fig. \ref{figure 10} indicates the imposed communication overhead for different number of nodes. It can be seen that the number of messages required in the leader selection of phase I and II are $5.9(n-3.25)$ and $3(n-1)$, respectively. Therefore, the results of simulation indicate that the total number of messages required for the leader selection in both phases is in the order of $n$.

\begin{figure}
\centering
      \includegraphics[width=2.1in,keepaspectratio]{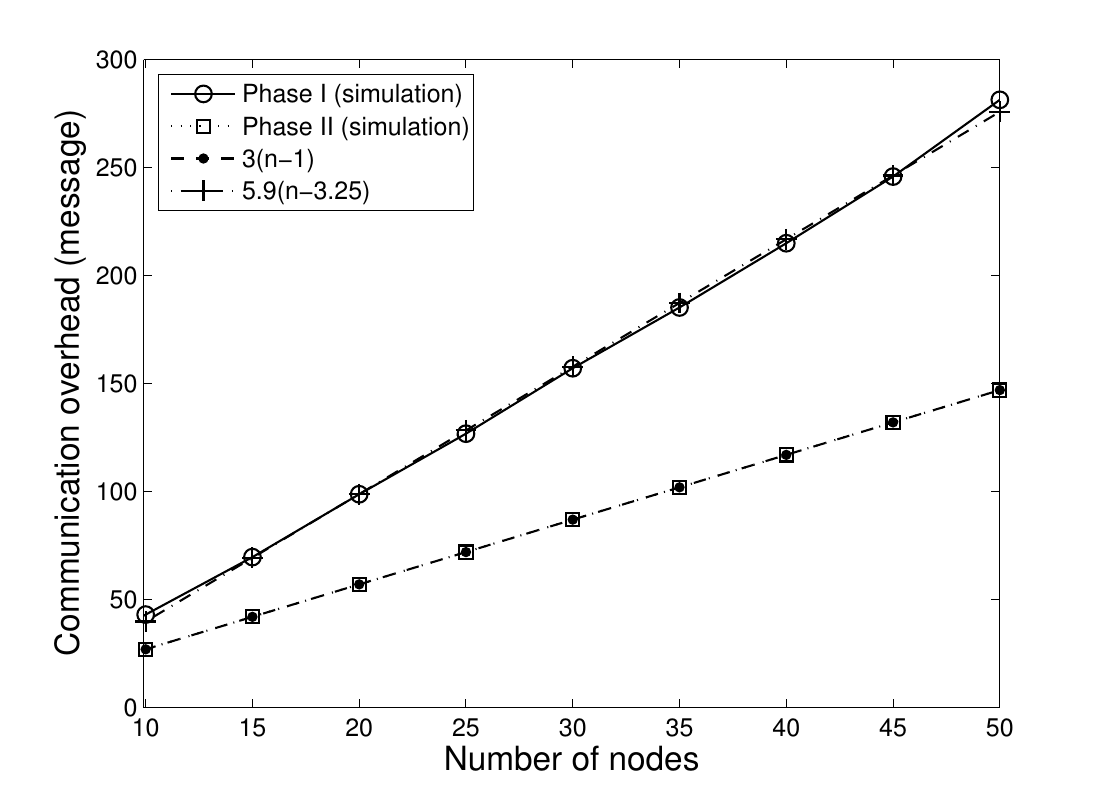}
      \caption{Communication overhead imposed on the network}
\label{figure 10}
\end{figure}
\section{Conclusion}
In this paper, we have provided a new solution for Resource Discovery (RD) in mobile device clouds (MDCs). In order to balance the energy consumption of all nodes and prolong the overall lifetime of MDCs, we have proposed LORD, a leader-based framework for RD in MDCs, in which the RD task is handled by a node selected as the leader. We have proposed a multi-player first-price sealed-bid auction to select the most qualified node as the leader. %In our framework, each leader assumes leadership role for only one period of time. During this period, the leader is responsible for introducing a suitable resource provider to each RD requesting node on the basis of a database which contains a list of resource status information of all participants. In order to encourage nodes to participate in the leadership role, the leader is provided with incentives in the form of virtual money. 
%In addition, to exchange the monetary incentive between nodes, a secure transaction model was proposed. We evaluated the performance of the proposed framework using our Qt-based simulation environment. 
Through simulation results, we have demonstrated that LORD is able to balance the energy consumption among mobile devices and increase their payoffs in comparison with the decentralized RD model.

\newpage
\appendices
\section{Validating the Assumption $c_i < \overline{c}_i$}
\label{validating}
Let's consider three general cases, $c_i ^{M} = 10c_i ^{P}$, $c_i ^{M} = c_i ^{P}$, and $c_i ^{M} = 0.1c_i ^{P}$. In order to be able to compare RDPC and RDC, we assume that number of nodes is at least 10 ($n\geq 10$), number of RDs for each node is at least 1 ($\eta \geq 1$), and the number of times the leader searches the database for finding a resource provider for itself is smaller than or equal to 10 ($m_i \leq 10$). Considering these assumption as well as (\ref{equation:xx333}), it can be easily shown that:
\begin{align}
\label{equation:xx666}
&1 + \dfrac{m_i +1}{M}< 2  \hspace{3mm} \&  \hspace{2mm} 3+\dfrac{2}{\eta} < 5 \hspace{5mm} \Longrightarrow \nonumber \\
&c_i = (1 + \dfrac{m_i +1}{M})c_i ^{P} + (3+\dfrac{2}{\eta})c_i ^{M}  < 2c_i ^{P} + 5c_i ^{M}
\end{align} 
Furthermore, considering aforementioned assumptions and (\ref{equation:xx5}), it can be shown that:
\begin{align}
\label{equation:xx667}
 \overline{c}_i = (n-1)c_i^M \geq 9c_i ^{M}
\end{align} 
Now, we compare (\ref{equation:xx666}) and (\ref{equation:xx667}) in three aforementioned cases,
\begin{itemize}
\item $c_i ^{M} = 10c_i ^{P}$. In this case, we have $c_i < 52c_i ^{P}$ and  $\overline{c}_i \geq 90 c_i ^{P}$, and therefore we can conclude that $c_i < \overline{c}_i$. 
\item $c_i ^{M} = c_i ^{P}$. In this case, we have $c_i < 7c_i ^{P}$ and  $\overline{c}_i \geq 9 c_i ^{P}$, and therefore we can conclude that $c_i < \overline{c}_i$.
\item $c_i ^{M} = 0.1c_i ^{P}$. In this case, we have $c_i < 2.5c_i ^{P}$ and  $\overline{c}_i \geq 0.9 c_i ^{P}$, and therefore we cannot conclude that $c_i < \overline{c}_i$.
\end{itemize}

Considering these cases, if $c_i ^{P} \leq c_i ^{M}$ (which is plausible since $c_i ^{P}$ is a negligible cost due to high processing power available in today's smartphones \cite{x29}), our assumption that $c_i < \overline{c}_i$ is valid.

\section{Proof of Lemma 1}
\label{p_l1}
%\begin{proof}
Rewriting (\ref{equation:x13}) as follows:
\begin{equation}
\begin{aligned}
\label{equation:x15}
\sigma^{'}(\widehat{c}) - \dfrac{(n-1)M+\eta}{M(K-\widehat{c})}\sigma(\widehat{c}) = - \dfrac{\widehat{c}(n-1)}{K-\widehat{c}}
\end{aligned}
\end{equation}
we get to the form of first order linear differential equation that is $y^{'} (x) + p(x)y = q(x)$. Considering the general solution of this equation, the solution of (\ref{equation:x15}) can be written as follows:
\begin{align}
\label{equation:x16}
\sigma (\widehat{c}) = u^{-1}(\widehat{c}) \int u(\widehat{c}) \dfrac{\big(-\widehat{c}(n-1)\big)}{K-\widehat{c}} d\widehat{c}
\end{align}
where $u(\widehat{c})= exp(\int \dfrac{-A}{K-\widehat{c}} d\widehat{c})= (K-\widehat{c})^A$ in which $A= \dfrac{(n-1)M+\eta}{M}$. Therefore,
\begin{align}
\label{equation:x17}
\sigma (\widehat{c}) = (K-\widehat{c})^{-A} \int (K-\widehat{c})^{A}(n-1)\big(1- \dfrac{K}{K-\widehat{c}}\big) d\widehat{c}
\end{align}
Simplifying (\ref{equation:x17}), the solution is,
\begin{align}
\label{equation:x18}
\sigma (\widehat{c})= \frac{1}{A+1}(n-1)(\widehat{c} + \frac{K}{A})
\end{align} 
Substituting $A= \dfrac{(n-1)M+\eta}{M}$ where $M=(n-1)\eta$ in (\ref{equation:x18}) results in,
\begin{align}
\label{equation:x140}
b = \sigma (\widehat{c})= \frac{(n-1)^2}{n(n-1)+1}\Big(\widehat{c} + \frac{K(n-1)}{(n-1)^2+1}\Big)
\end{align} 

\section{Proof of Proposition 1}
\label{p_p1}
In order to have $b_i < \overline{c}_i \hspace{3mm} \forall i$, based on (\ref{equation:x14}) we need,
\begin{align}
b_i = \frac{(n-1)^2}{n(n-1)+1}\Big(\widehat{c}_i + \frac{K(n-1)}{(n-1)^2+1}\Big) \leq \overline{c}_i 
\label{eq1} 
\end{align} 
Substituting $\widehat{c}_i$ from (\ref{equation:xx2232}) and $\overline{c}_i$ from (\ref{equation:xx5}) in (\ref{eq1}),
\begin{align}
&\frac{(n-1)^2}{n(n-1)+1}\Big((1 + \dfrac{m_i +1}{M})c_i ^{P} + \big(3+\dfrac{2}{\eta} - \dfrac{1}{n-1} - \dfrac{1}{M}\big)c_i ^{M} 
\nonumber
\\
&+ \frac{K(n-1)}{(n-1)^2+1}\Big) \leq (n-1)c_i ^M 
\label{eq3} 
\end{align}

After simplifying, the value of $\eta$ that satisfies (\ref{eq3}) can be obtained as follows:
 \begin{equation}  
\label{eq_2}
\eta \geq \dfrac{\eta_{num}}{\eta_{den}}
\end{equation}
where,
\begin{align*}
\eta_{num} &= (n^2 -2n +2) \Big((m_i +1)c_i ^{P} + (2n-3)c_i ^{M}\Big)
\\
\eta_{den} &= (n^2 -4n+1)(n^2 -2n +2)c_i ^M
\\
& - (n-1)^2 K -(n-1)(n^2 -2n +2)c_i ^{P}
\end{align*}

Recall that $c_i^{P}$ is a function of $n$ and the resource status of node $i$ ($RS_i$), $c_i^{M}$ is a function of only $RS_i$. RHS of (\ref{eq_2}) is a function of $n$, $RS_i$, and $m_i$, therefore (\ref{eq_2}) can be rewritten as, 
\begin{align} 
\label{eq_4}
\eta &\geq \eta_{min} (n,RS_i,m_i)  
\end{align} 

Let us denote the upper bound of $m_i$ for all $i$ by $\lambda$. If we assume that the resource status of all nodes to belong to $[RS_{min}, RS_{max}]$, the minimum number of RD queries done by the leader for each client in a mobile device cloud (MDC) of size $n$ to satisfy $b_i < \overline{c}_i \hspace{2mm} \forall i$ is,
\begin{align}
\label{contraint_1}
\eta_{min}(n)=\max_{RS_i\in [RS_{min}, RS_{max}]}\eta_{min}(n,RS_i,\lambda) 
\end{align}
Choosing $\eta=\eta_{min}(n)$ in a MDC of size $n$ is a sufficient condition to ensure  $b_i < \overline{c}_i \hspace{3mm} \forall i$.  

To see whether (\ref{contraint_1}) is a hard constraint or not, we analyze this constraint numerically. To this end, we consider a MDC of varying size $n$. For each $n$, we assume RDPC (as a representation for resource status) of nodes are distributed uniformly in the interval $(0,1)$. Then, considering the relationship between messaging and processing costs of node $i$ as $c_i ^M = 10 c_i ^P$ and using (\ref{equation:xx2232}), $c_i ^P$ and $c_i ^M$ can be calculated. Furthermore, we assume $\lambda = n$ (that is, a leader performs RD at most $n$ times for itself). We change $n$ from 10 to 30 and for each $n$, we find $\eta_{min}(n)$ according to (\ref{contraint_1}). Figure \ref{eta} shows $\eta_{min}(n)$ for various values of $n$. As can be seen from this figure, for $n>10$, we need $\eta_{min}(n) <1$. Since we know that the leader provides at least 1 RD for each client (that is, $\eta \geq 1$), the requirement on $\eta_{min}$ is satisfied. Hence, there is no constraint on $\eta_{min}$ and $b_i < \bar{c}_i$ holds.

\begin{figure}
\begin{center}
\includegraphics[width=3.3in,keepaspectratio]{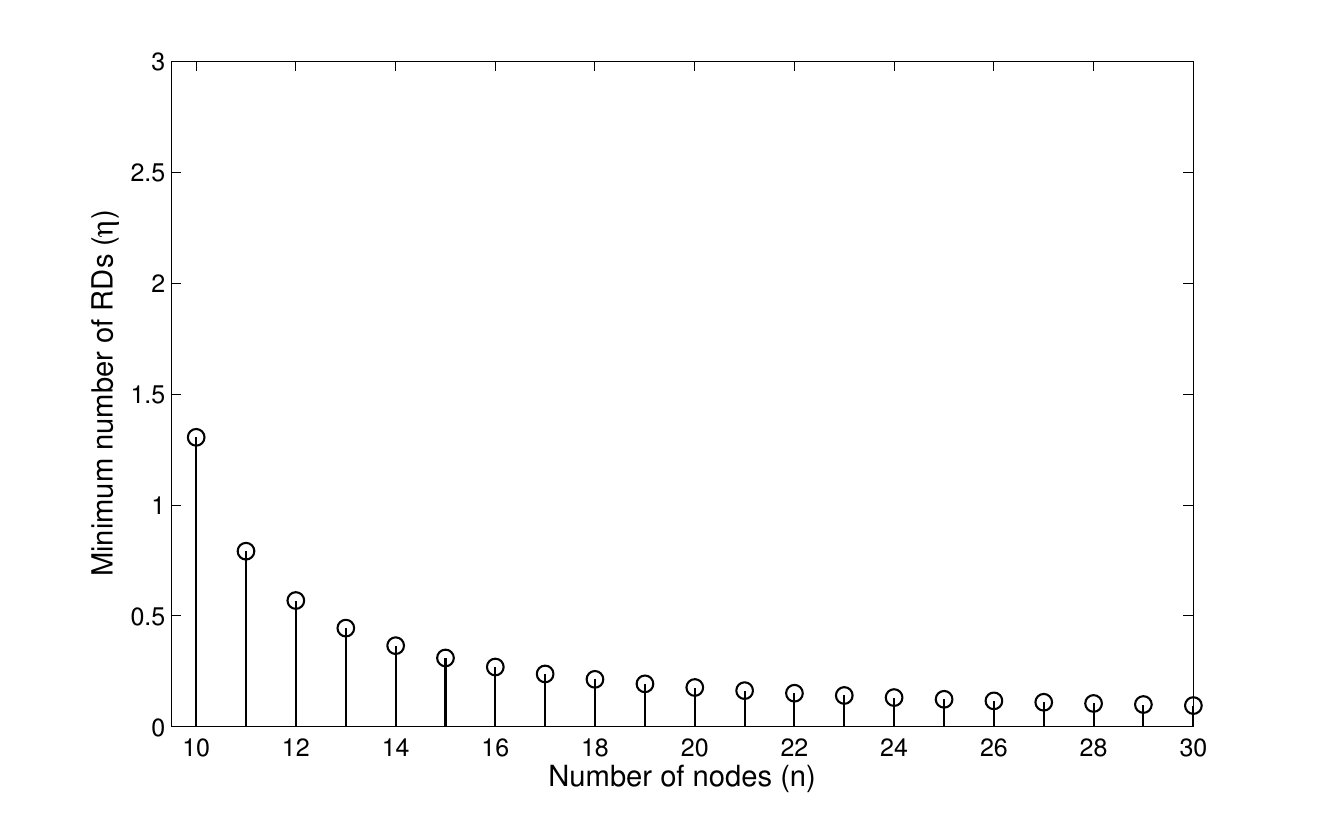}
\end{center}
\caption{Minimum number of RDs ($\eta$) for a MDC with different sizes}
\label{eta}
\end{figure}

\section{Proof of Theorem 1}
\label{p_t1}
If all players except for player $i$ play based on (\ref{equation:x14}), the expected payoff for node $i$ can be written as:

\begin{small}
\begin{align} 
\label{equation:x19}
&\widehat{U}_i \big(\widehat{c}_i , b_i ;\{\sigma(\widehat{c}_k)\}_{k \neq i} \big) = M(b_i - \widehat{c}_i) \big(\frac{K - \sigma ^{-1}(b_i)}{K}\big)^{n-1} 
\nonumber \\
&-
\int_{0} ^{\sigma ^{-1}(b_i)}\frac{\eta(n-1)M}{K(nM+\eta)}\big(\widehat{c}_j + \frac{MK}{(n-1)M+\eta}\big)\big(\frac{K- \widehat{c}_j}{K}\big)^{n-2}d\widehat{c}_j 
\end{align}
\end{small}
Simplifying,
\begin{small}
\begin{align} 
\label{equation:x20}
\widehat{U}_i \big(\widehat{c}_i , b_i ;\{\sigma(\widehat{c}_k)\}_{k \neq i}\big) &= M(b_i - \widehat{c}_i) \big(K - \sigma ^{-1}(b_i)\big)^{n-1}
\nonumber \\
&- \frac{\eta(n-1)M}{nM+\eta}\times\frac{\big(K - \sigma ^{-1}(b_i)\big)^{n}}{n} 
\nonumber
\\ & + \frac{\eta(n-1)MK}{(n-1)M+\eta}\times\frac{\big(K - \sigma ^{-1}(b_i)\big)^{n-1}}{n-1}
\nonumber
\\ &+ \frac{\eta(n-1)M}{nM+\eta}\times\frac{K^{n}}{n} 
\nonumber
\\ &
- \frac{\eta(n-1)M}{(n-1)M+\eta}\times\frac{K^{n}}{n-1}
\end{align}
\end{small}
For any \begin{small}
$b_i \in [\dfrac{(n-1) M^2 K}{(nM+ \eta)\big((n-1)M+\eta\big)},\dfrac{(n-1)MK}{(n-1)M+ \eta}]$:
\begin{align}
\label{equation:x21}
\sigma ^{-1}(b_i)= \frac{(nM+\eta)b_i}{(n-1)M} - \frac{MK}{(n-1)M+\eta}
\end{align}
\end{small}
Substituting (\ref{equation:x21}) in (\ref{equation:x20}) yields:
\begin{small}
\begin{align} 
\label{equation:x22}
\widehat{U}_i \big(&\widehat{c}_i , b_i;\{\sigma(\widehat{c}_k)\}_{k \neq i}\big) 
\nonumber
\\ &= M(b_i - \widehat{c}_i) (nM+\eta)^{n-1}\Big(\frac{K}{(n-1)M+\eta} -\frac{b_i}{(n-1)M}\Big)^{n-1} 
\nonumber
\\ &
- \eta(n-1)M(nM+\eta)^{n-1}\frac{\big(\frac{K}{(n-1)M+\eta} -\frac{b_i}{(n-1)M}\big)^{n}}{n} 
\nonumber
\\ & 
- \frac{\eta(n-1)MK}{(n-1)M+\eta}(nM+\eta)^{n-1}\frac{\big(\frac{K}{(n-1)M+\eta} -\frac{b_i}{(n-1)M}\big)^{n-1}}{n-1} 
\nonumber
\\ &
+ \frac{\eta(n-1)M}{nM+\eta}\times\frac{K^{n}}{n} - \frac{\eta(n-1)M}{(n-1)M+\eta}\times\frac{K^{n}}{n-1}
\end{align}
\end{small}
For any \begin{small} $b_i \in [\dfrac{(n-1) M^2 K}{(nM+ \eta)\big((n-1)M+\eta\big)},\dfrac{(n-1)MK}{(n-1)M+ \eta}]$:
\begin{align}
%\label{equation:x23}
\dfrac{\partial^2{U_i (\widehat{c}_i , b_i ;\{\sigma(\widehat{c}_k)\}_{k \neq i})}}{\partial {b_i}^2}<0 \nonumber
\end{align}
\end{small}

Therefore, $\widehat{U}_i$ is strictly concave in \begin{small}
$b_i \in [\dfrac{(n-1) M^2 K}{(nM+ \eta)\big((n-1)M+\eta\big)},\dfrac{(n-1)MK}{(n-1)M+ \eta}]$,
\end{small}
and has a maximum value at $b_i$ identified in (\ref{equation:x14}). It is clear that deviation from such $b_i$ is not profitable. Hence, the strategy identified in (\ref{equation:x14}) is the unique symmetric Bayesian Nash equilibrium.

\section{Proof of Proposition 2}
\label{p_p2}
Let $E_i^c(t)$ and $\bar{E}_i^c(t)$ denote consumed energy of node $i$ at time-slot $t$ for LORD and decentralized model, respectively. According to (\ref{equation:xx2}) and (\ref{equation:xx333}), if node $i$ is the leader at time-slot 
$t$, $E_i^c(t)$ can be written as,
\begin{align}
\label{proof_p2_1}
 E_i^c(t)=Mc_i = M \Big( (1 + \dfrac{m_i +1}{M})c_i ^{P} + (3+\dfrac{2}{\eta})c_i ^{M}  \Big)  \end{align}

If node $i$ is a client at time-slot $t$, according to (\ref{equation:xx4}), $E_i^c(t)$ can be written as,
\begin{align}
\label{proof_p2_2}
 E_i^c(t)= (\eta +1)c_i^M
 \end{align}
 
For decentralized model, according to (\ref{equation:xx5}), $\bar{E}_i^c(t)$ can be written as,
\begin{align}
\label{proof_p2_3}
\bar{E}_i^c(t)= \eta\bar{c}_i = \eta (n-1)c_i^M = M c_i^M
 \end{align}
 
In order to compare the total consumed energy per time-slot between LORD and decentralized model, let $E_i^{tc}(t)$ and $\bar{E}_i^{tc}(t)$ denote total consumed energy of nodes at time-slot $t$ for LORD and decentralized model, respectively. According to (\ref{proof_p2_1}) and (\ref{proof_p2_2}), $E_i^{tc}(t)$ is,
\begin{align}
\label{proof_p2_4}
 &E_i^{tc}(t)= \sum_{i=1}^n E_i^c (t) = \sum_{i=1}^n (\eta +1)c_i^M \nonumber
 \\
 & + 
 \min_i \Big\{   M \Big( (1 + \dfrac{m_i +1}{M})c_i ^{P} + (3+\dfrac{2}{\eta})c_i ^{M}  \Big)  - (\eta +1)c_i^M \Big\}
 \nonumber \\
 & = 
 \sum_{i=1}^n (\eta +1)c_i^M 
 \nonumber \\
& + 
 \min_i \Big\{   M \Big( (1 + \dfrac{m_i +1}{M})c_i ^{P} + (3+\dfrac{2}{\eta} - \dfrac{\eta +1}{M})c_i ^{M}  \Big) \Big\} 
 \end{align}

According to (\ref{proof_p2_3}), $\bar{E}_i^{tc}(t)$ is, 
\begin{align}
\label{proof_p2_5}
\bar{E}_i^{tc}(t)= M \sum_{i=1}^n c_i^M
 \end{align}

By considering the relationship between $c_i^M$ and $c_i^P$ as $c_i^M = \theta c_i^P$, (\ref{proof_p2_4}) can be simplified as, 
 \begin{align}
\label{proof_p2_6}
 &E_i^{tc}(t)= 
 \sum_{i=1}^n (\eta +1)c_i^M 
 \nonumber \\
& + 
 \min_i \Big\{   M \Big( \dfrac{1}{\theta}(1 + \dfrac{m_i +1}{M}) + 3+\dfrac{2}{\eta} - \dfrac{\eta +1}{M} \Big)c_i ^{M}  \Big\} 
 \end{align}
Note that by assuming $m_i \simeq \eta$ for all $i$, $\dfrac{m_i +1}{M} = \dfrac{m_i +1}{\eta(n-1)} <1$ for $n\geq10$. Therefore, we have,
 \begin{align}
\label{proof_p2_7}
E_i^{tc}(t) <
(\eta +1) \sum_{i=1}^n c_i^M  + 
 \min_i \Big\{   M \Big( \dfrac{2}{\theta} + 3+\dfrac{2}{\eta} \Big)c_i ^{M}  \Big\} 
 \end{align}
 
Now, in order to have $E_i^{tc}(t)  < \bar{E}_i^{tc}(t) $, considering (\ref{proof_p2_7}), we should have,
\begin{align}
\label{proof_p2_8}
(\eta +1) \sum_{i=1}^n c_i^M  + 
 \min_i \Big\{   M \Big( \dfrac{2}{\theta} + 3+\dfrac{2}{\eta} \Big)c_i ^{M}  \Big\} 
 < \bar{E}_i^{tc}(t) 
 \end{align}
 
Substituting $\bar{E}_i^{tc}(t) $ from (\ref{proof_p2_5}) in  (\ref{proof_p2_8}), we have,
\begin{align}
\label{proof_p2_9}
(\eta +1) \sum_{i=1}^n c_i^M  + 
 \min_i \Big\{   M \Big( \dfrac{2}{\theta} + 3+\dfrac{2}{\eta} \Big)c_i ^{M}  \Big\} 
 < M \sum_{i=1}^n c_i^M
 \end{align}
 
Simplifying (\ref{proof_p2_9}), the sufficient condition for $E_i^{tc}(t)  < \bar{E}_i^{tc}(t) $ can be written as,
\begin{align}
\label{proof_p2_10}
\min_i c_i^M < \dfrac{1- \frac{\eta +1}{\eta(n-1)}}{3+ \frac{2}{\eta} +  \dfrac{2}{\theta} } \sum_{j=1}^n c_j^M
\end{align}

If $\theta \geq1$, 
\begin{align}
\label{proof_p2_11}
\dfrac{1- \frac{\eta +1}{\eta(n-1)}}{5+ \frac{2}{\eta}} \sum_{j=1}^n c_j^M
\leq \dfrac{1- \frac{\eta +1}{\eta(n-1)}}{3+ \frac{2}{\eta} +  \dfrac{2}{\theta} } \sum_{j=1}^n c_j^M
\end{align}

Furthermore, assuming node $k$ has the lowest $c_j^M$, that is $\min_i c_i^M = c_k^M$; we have 
\begin{align}
\label{proof_p2_12}
c_k^M \leq c_j^M \hspace{2mm} \forall j \Rightarrow nc_k^M \leq \sum_{i=1}^n c_i^M 
\end{align}

Combining (\ref{proof_p2_11}) and (\ref{proof_p2_12}), we have,
\begin{align}
\label{proof_p2_13}
\dfrac{1- \frac{\eta +1}{\eta(n-1)}}{5+ \frac{2}{\eta}} nc_k^M
\leq \dfrac{1- \frac{\eta +1}{\eta(n-1)}}{5+ \frac{2}{\eta}} \sum_{j=1}^n c_j^M
\end{align}

Now, considering  (\ref{proof_p2_10}) and (\ref{proof_p2_13}), we can find a new condition for $E_i^{tc}(t)  < \bar{E}_i^{tc}(t) $ as,
\begin{align}
\label{proof_p2_14}
&c_k^M <\dfrac{1- \frac{\eta +1}{\eta(n-1)}}{5+ \frac{2}{\eta}} nc_k^M 
\Rightarrow  \nonumber \\
&n - \dfrac{n(\eta +1)}{\eta(n-1)} -5 - \frac{2}{\eta} >0
\end{align}

We define $K(n,\eta) =  n - \dfrac{n(\eta +1)}{\eta(n-1)} -5 - \frac{2}{\eta}$. Table \ref{Energy_table} shows the values of $K(n,\eta)$ for $n=$ 10 to 30 and $\eta =$ 1 to 10. As can be seen from Table \ref{Energy_table}, $K(n,\eta)>0$. Therefore, for $n \geq 10$, the condition of (\ref{proof_p2_14}) holds for any $\eta$. In other words,  for $n \geq 10$ LORD consumed less energy per time-slot compared to decentralized model, that is $E_i^{tc}(t)  < \bar{E}_i^{tc}(t)  \hspace{2mm} \forall t$.

Furthermore, note that according to aforementioned analysis, 
\begin{align}
\label{proof_p2_15}
 \bar{E}_i^{tc}(t) - E_i^{tc}(t) > K(n,\eta)
\end{align}
which means that $K(n,\eta)$ is a lower-bound on the amount of energy that LORD saves compared to decentralized model. As can be seen from Table \ref{Energy_table},
\begin{itemize}
\item For a fixed $\eta$, as the size of MDC (that is, $n$) increases, $K(n,\eta)$ increases. This means that as the size of MDC increases, LORD saves more energy in comparison with decentralized model.
\item For a fixed $n$, as $\eta$ increases, $K(n,\eta)$ increases. This means that as the demand of nodes for RD increases, LORD saves more energy in comparison with decentralized model.

\end{itemize}

\begin{table*}[b]
\begin{center}
\begin{tabular}{ cc||c|c|c|c|c|c|c|c|c|c||}
\cline{3-12} & & \multicolumn{10}{c||}{$\eta$}\\
\cline{3-12} &  & 1 & 2 & 3 & 4 & 5 & 6 & 7 & 8 & 9 &10 \\ [0.5ex] 
 \hline \hline
\multicolumn{1}{||c|}{\multirow{21}{*}{$n$} }
& 10 &     0.78 &     2.33 &     2.85 &     3.11 &     3.27 &     3.37 &     3.44 &     3.50 &     3.54 &     3.58 \\ 
\multicolumn{1}{||c|}{}
& 11 &     1.80 &     3.35 &     3.87 &     4.12 &     4.28 &     4.38 &     4.46 &     4.51 &     4.56 &     4.59 \\ 
\multicolumn{1}{||c|}{}
& 12 &     2.82 &     4.36 &     4.88 &     5.14 &     5.29 &     5.39 &     5.47 &     5.52 &     5.57 &     5.60 \\ 
\multicolumn{1}{||c|}{}
& 13 &     3.83 &     5.38 &     5.89 &     6.15 &     6.30 &     6.40 &     6.48 &     6.53 &     6.57 &     6.61 \\ 
\multicolumn{1}{||c|}{}
& 14 &     4.85 &     6.38 &     6.90 &     7.15 &     7.31 &     7.41 &     7.48 &     7.54 &     7.58 &     7.62 \\ 
\multicolumn{1}{||c|}{}
& 15 &     5.86 &     7.39 &     7.90 &     8.16 &     8.31 &     8.42 &     8.49 &     8.54 &     8.59 &     8.62 \\ 
\multicolumn{1}{||c|}{}
& 16 &     6.87 &     8.40 &     8.91 &     9.17 &     9.32 &     9.42 &     9.50 &     9.55 &     9.59 &     9.63 \\ 
\multicolumn{1}{||c|}{}
& 17 &     7.88 &     9.41 &     9.92 &    10.17 &    10.32 &    10.43 &    10.50 &    10.55 &    10.60 &    10.63 \\ 
\multicolumn{1}{||c|}{}
& 18 &     8.88 &    10.41 &    10.92 &    11.18 &    11.33 &    11.43 &    11.50 &    11.56 &    11.60 &    11.64 \\ 
\multicolumn{1}{||c|}{}
& 19 &     9.89 &    11.42 &    11.93 &    12.18 &    12.33 &    12.44 &    12.51 &    12.56 &    12.60 &    12.64 \\ 
\multicolumn{1}{||c|}{}
& 20 &    10.89 &    12.42 &    12.93 &    13.18 &    13.34 &    13.44 &    13.51 &    13.57 &    13.61 &    13.64 \\ 
\multicolumn{1}{||c|}{}
& 21 &    11.90 &    13.43 &    13.93 &    14.19 &    14.34 &    14.44 &    14.51 &    14.57 &    14.61 &    14.64 \\ 
\multicolumn{1}{||c|}{}
& 22 &    12.90 &    14.43 &    14.94 &    15.19 &    15.34 &    15.44 &    15.52 &    15.57 &    15.61 &    15.65 \\ 
\multicolumn{1}{||c|}{}
& 23 &    13.91 &    15.43 &    15.94 &    16.19 &    16.35 &    16.45 &    16.52 &    16.57 &    16.62 &    16.65 \\ 
\multicolumn{1}{||c|}{}
& 24 &    14.91 &    16.43 &    16.94 &    17.20 &    17.35 &    17.45 &    17.52 &    17.58 &    17.62 &    17.65 \\ 
\multicolumn{1}{||c|}{}
& 25 &    15.92 &    17.44 &    17.94 &    18.20 &    18.35 &    18.45 &    18.52 &    18.58 &    18.62 &    18.65 \\ 
\multicolumn{1}{||c|}{}
& 26 &    16.92 &    18.44 &    18.95 &    19.20 &    19.35 &    19.45 &    19.53 &    19.58 &    19.62 &    19.66 \\ 
\multicolumn{1}{||c|}{}
& 27 &    17.92 &    19.44 &    19.95 &    20.20 &    20.35 &    20.46 &    20.53 &    20.58 &    20.62 &    20.66 \\ 
\multicolumn{1}{||c|}{}
& 28 &    18.93 &    20.44 &    20.95 &    21.20 &    21.36 &    21.46 &    21.53 &    21.58 &    21.63 &    21.66 \\ 
\multicolumn{1}{||c|}{}
& 29 &    19.93 &    21.45 &    21.95 &    22.21 &    22.36 &    22.46 &    22.53 &    22.58 &    22.63 &    22.66 \\ 
\multicolumn{1}{||c|}{}
& 30 &    20.93 &    22.45 &    22.95 &    23.21 &    23.36 &    23.46 &    23.53 &    23.59 &    23.63 &    23.66 \\ 
\hline
\end{tabular}
\end{center}
\caption{}
\label{Energy_table}
\end{table*}

\section{Suitable Punishment Policy for Phase II}
\label{punishment}
Before the end of each time-slot, the leader should perform an auction among all participants including itself. Since the current leader of the MDC has the responsibility of performing the auction for the next time-slot, such a node can disturb our leader selection process in order to select itself as the next leader by manipulating the values offered by other. In order to avoid such misbehavior, as illustrated in Fig. \ref{figure_5}, we consider a period of time at the end of each auction process where nodes can claim that their bids are manipulated by the current leader. We also consider a consecutive period in which such claims are evaluated by other participants. 

\begin{figure}[h]
\begin{center}
\includegraphics[width=3in,keepaspectratio]{}
\caption{Mobile device cloud management timing}
\label{figure_5}
\end{center}
\end{figure}

According to Fig. \ref{figure_5}, after the auction process and in claim announcement period, nodes are able to object to the result if their bids have been changed by the current leader. If there is only one node with such a claim, it will broadcast a message containing the valid value of its bid. When other nodes receive such a message, in claim justification period, they compare the bid declared by such a node with the one sent by the current leader. In the case that the claim is indeed legitimate, all MDC members except for the current leader confirm such a claim, send a message to such a node, and ask it to perform the leader selection process itself. 

If there is more than one node claiming that the current leader has changed their bids, all nodes process the received messages by comparing the values sent by the current leader with the ones claimed by these nodes. If the claims of such nodes are valid, the current leader is put into the blacklist and deprived of participating in the MDC. Then, among the claimant nodes, the node with the lowest bid is selected as the node which should perform the leader selection process. On the other hand, in the case that the claims are not valid, nodes which have received such claims send a message to the current leader, and ask it to put such nodes in the blacklist due to their fake claim. Then, as a punishment, such nodes are deprived of participating in the MDC.

Considering such punishment policy, the current leader has no incentive to deviate from the truthful strategy. Furthermore, clients have no incentive to misbehave, because they know that they will be punished if they raise an invalid objection.

Since the leader selection process imposes a cost mainly on the leader, based on (\ref{equation:xx1}), the leader receives payment in order to compensate for the cost imposed on it due to performing a leader selection process. Suitable punishment policy such as exclusion from the MDC can be used in order to make sure that the leader does not deviate from performing the auction.

\end{document}